\def\BibTeX{{\rm B\kern-.05em{\sc i\kern-.025em b}\kern-.08em
    T\kern-.1667em\lower.7ex\hbox{E}\kern-.125emX}}
  \newtheorem{theorem}{Theorem}
\newtheorem{proposition}{Proposition}
\newtheorem{lemma}{Lemma}
\newtheorem{rem}{Remark}
\newcommand{\bieee}{\begin{IEEEeqnarray}{rCl}}
\newcommand{\eieee}{\end{IEEEeqnarray}}
\newcommand{\nn}{\nonumber}
\begin{document}

\title{Non-Coherent Fast-Forward Relays for Full-Duplex Jamming Attack
}

\author{\IEEEauthorblockN{Vivek Chaudhary, Harshan Jagadeesh}
\IEEEauthorblockA{Department of Electrical Engineering, Indian Institute of Technology, Delhi, India\\
chaudhary03vivek@gmail.com, jharshan@gmail.com}
}

\maketitle

\begin{abstract}
This work addresses a strategy to mitigate jamming attack on low-latency communication by a Full-Duplex (FD) adversary in fast-fading channel conditions. The threat model is such that the FD adversary can jam a frequency band and measure the jammed band's power level. We first point out that due to the presence of this FD adversary, Frequency Hopping (FH) fails. We then propose a fast-forward cooperative relaying scheme, wherein the victim node hops to the frequency band of a nearby FD helper node that fast-forwards the victim's symbol along with its symbol. At the same time, the victim and the helper cooperatively pour some fraction of their power on the jammed band to engage the adversary. Due to fast-fading channel conditions, the victim and the helper use amplitude based non-coherent signalling referred to as Non-Coherent Fast-Forward Full-Duplex (NC-F2FD) relaying. To minimize the error-rate of this strategy, we jointly design the constellations at the helper node and the victim node by formulating an optimization problem. Using non-trivial results, we first analyse the objective function and then, based on the analytical results, propose a low-complexity algorithm to synthesize the fast-forwarded constellations. Through simulations, we show that the error performance of the victim improves after employing our countermeasure.
\end{abstract}

\begin{IEEEkeywords}
Full-duplex, optimization, non-coherent, low-latency communication, 
\end{IEEEkeywords}

\section{Introduction}

Critical delay requirements in the industrial automation system, medical industry and automotive applications in 5G technology are such that $<$1ms end-to-end delay is expected across layers \cite{IoT}. Due to stringent latency constraints, it becomes imperative to secure such applications from an attack that introduces a delay in message delivery. One such simple, yet effective attack is the Denial of Service (DoS) attack~\cite{DoS}, where the adversary reduces the Signal-to-Interferece-Noise Ratio (SINR) at the receiver below the required threshold by injecting excess noise. Although Frequency Hopping (FH) is known to combat DoS attack, it requires the victim to hop out of the operating frequency, thus dropping the power level on the jammed band. Moreover, with advancements in Self-Interference Cancellation (SIC) techniques in Full-Duplex (FD) radios \cite{FD1}, a jammer can use an FD radio for jamming~\cite{Hanwal} and to detect a drop in power level~\cite{my_PIMRC}, \cite{my_TCCN}. For instance, \cite{my_PIMRC} and \cite{my_TCCN} model an adversary which uses in-band FD radio to jam a frequency band while simultaneously measuring the power fluctuation on the jammed band. The jammer, in this case, detects a countermeasure if it experiences a drop in the power level of the jammed frequency band. \cite{my_PIMRC} and \cite{my_TCCN} also discuss the mitigation techniques to combat FD jamming attacks using fast-forward cooperative communication. Here, the victim node and the FD helper node use non-coherent and coherent modulation techniques, respectively.

The countermeasure provided by \cite{my_PIMRC} and \cite{my_TCCN} assume a slow-fading channel; thus, the helper uses coherent modulation techniques. However, when channel conditions vary rapidly, coherent detection is challenging. Therefore, the limitations of FH in mitigating FD jamming attacks followed by the challenges in performing coherent detection in fast-fading channel motivates us to synthesize countermeasures that combat FD jamming adversary in fast-fading channel conditions. Therefore, in this paper, we propose a Non-Coherent Fast-Forward Full-Duplex (NC-F2FD) relaying scheme, wherein both the victim and the helper use non-coherent modulation schemes.

\subsection{Contributions}
The proposed NC-F2FD relaying protocol is such that the victim moves out of the jammed band and shares the uplink frequency with a nearby FD helper node. Further, the victim and the helper use modified constellations (from their original constellation) for communicating with the base-station. Meanwhile, the base-station performs joint decoding of both users. The focus of this paper is to choose the constellation points at the victim node as well as the helper such that, the error performance of the victim improves with increasing Signal-to-Noise Ratio (SNR). The contributions of this paper are as follows:
\begin{itemize}
\item For the proposed NC-F2FD scheme, we formulate a constrained optimization problem to minimize the joint probability of error at the base-station over the modified constellations of the victim and the helper node.
\item Using non-trivial analysis, we study the behaviour of the objective function in terms of modified constellations, and based on the behaviour of objective function, we propose a low-complexity algorithm to obtain the modified constellations.
\item We present the joint error performance of two nodes as a function of SNR when the constellation points are obtained using the proposed algorithm. We also present the joint error performance when the constellation points are obtained using exhaustive search and show that the error performance of the two methods overlap. Finally, when using NC-F2FD, we show that the error performance of the victim improves with increasing SNR.
\end{itemize} 

\subsection{Related work and Novelty}
 Our work can be viewed as a constellation design problem for a non-coherent fast-forward strategy. With respect to related work on constellation design, \cite{Ranjan} and \cite{Goldsmith2} compute the optimal constellation points for multilevel non-coherent signalling in Rayleigh fading channel for point-to-point communication. Recently, \cite{Joint-MAC} and \cite{Goldsmith1} addressed the problem of constellation design for non-coherent Multiple Access Channels (MAC). However, due to fast-forward FD relaying, our work cannot be viewed as a direct extension of~\cite{Ranjan}, \cite{Goldsmith2}, \cite{Joint-MAC}, or \cite{Goldsmith1}. The work closest to our idea are \cite{my_PIMRC} and \cite{my_TCCN}, where the authors assume an FD \emph{jam and measure} adversary. However, \cite{my_PIMRC} and \cite{my_TCCN} assume a slow-fading channel between the nodes and base-station, whereas this work considers a fast-fading channel. Thus, the overall system model is entirely different.

\section{System Model}

\begin{figure}
\centering
\includegraphics[width = 8cm, height = 5cm]{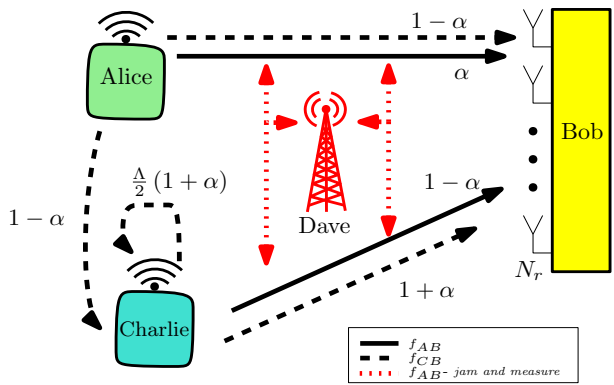}
\caption{\label{fig:sys_model} System model comprising Alice (the victim), Charlie (the helper), Bob (base-station) and Dave (FD adversary). Alice and Charlie cooperatively pour $0.5(1-\alpha)$ and $0.5(1+\alpha)$ on $f_{CB}$, respectively. The residual energy in poured on $f_{AB}$, which is constantly monitored by Dave. $\Lambda$ represents the SIC achieved at Charlie.}
\vspace{-0.5cm}
\end{figure}

We consider a network model with multiple single-antenna transmitting nodes and a base-station with $N_{r}$ antennas. The transmitting nodes have heterogeneous demands. Further, the symbols from these nodes experience fast-fading channels. \figurename{\ref{fig:sys_model}} captures one such instance where two nodes, Alice and Charlie convey their messages on orthogonal frequencies, $f_{AB}$ and $f_{CB}$, respectively, to the multi-antenna base-station, Bob. Alice has messages with low-latency constraint, while Charlie has messages with no such low-latency constraints. Since low-latency messages are accompanied by low data-rate (e.g., control channel messages (PUCCH) in 5G), we assume that Alice uses two-level non-coherent Amplitude Shift Keying (ASK). Additionally, due to fast-fading channel conditions, Charlie also resorts to a non-coherent ASK signalling scheme. Although Alice and Charlie use non-coherent ASK symbols, the average power on these nodes may differ due to heterogeneous demands. We now introduce an FD \emph{jam and measure} adversary, Dave, who injects high powered noise on $f_{AB}$ to introduce delay in Alice's low-latency messages. Since Dave is equipped with an FD radio, he can simultaneously measure the power level on $f_{AB}$ besides jamming $f_{AB}$. Thus, in the presence of such a sophisticated adversary, if Alice chooses to use a completely different band for transmission, Dave will measure a considerable dip in the power level of $f_{AB}$, thus detecting the countermeasure. The detection of countermeasure at Dave will compel him to attack any of the remaining frequencies, further degrading the network performance. Therefore, a countermeasure must ensure reliable reception of Alice's symbols at Bob using a different frequency band, while maintaining the power level on the jammed frequency band same as before implementing the countermeasure. In the next section, we present a novel relaying strategy wherein Alice, with the help of Charlie, communicate her low-latency messages to Bob by sharing Charlie's uplink frequency through a non-coherent strategy. 

Throughout the paper, $(\cdot)^{H}$ represents Hermitian of a vector; bold face represent a vector. Further, $\Gamma(a,x)$, $\gamma(a,x)$, and $\Gamma(x)$  are upper incomplete Gamma function, lower incomplete Gamma function, and gamma function, respectively.

\section{Non-Coherent Fast-Forward Full-Duplex Relaying (NC-F2FD)}

In the proposed relaying scheme, Bob directs Alice to communicate her messages on $f_{CB}$, i.e., the uplink frequency of a nearby node, Charlie. Charlie is a single-antenna transceiver supporting in-band FD communication on $f_{CB}$, and being a legitimate node in the network, has symbols to convey to Bob. Therefore, Charlie listens and decodes Alice's symbol instantaneously on $f_{CB}$, multiplexes the decoded symbol into his symbol, and transmits the multiplexed symbol on $f_{CB}$ to Bob. With this multiplexing strategy, Bob observes multiple access channel on $f_{CB}$ and attempts to decode Alice's and Charlie's symbols jointly. Notice that if, Alice utilises all her power for communicating on $f_{CB}$, Dave measures a significant dip in the power level on $f_{AB}$ and therefore this countermeasure is not effective. As a result, Alice and Charlie must ensure that the power level on $f_{AB}$ remains the same as before applying the countermeasure. To accomplish this, Alice and Charlie cooperatively pour only a fraction of their power on $f_{CB}$, using $\alpha\in(0,1)$, and the residual power from the nodes are injected on $f_{AB}$. Henceforth, we refer to $\alpha$ as the power splitting factor. Furthermore, since both links (Alice-to-Charlie, Charlie-to-Bob) are non-coherent, we refer to our relaying scheme as Non-Coherent Fast-Forward Full-Duplex (NC-F2FD) scheme. The aim of NC-F2FD is to help Alice reliably communicate her messages to Bob on $f_{CB}$ with acceptable degradation in the error performance of Charlie. We will focus on the error analysis on $f_{CB}$ at Charlie and Bob in the upcoming sections.

\subsection{Signal Model:NC-F2FD}
Before using the cooperative relaying scheme, let Alice and Charlie choose symbols from a scaled version of $\mathcal{S} = \{0,1\}$. Here, the scale factor determines the average transmit power, such that the average powers are $0.5$ and $1$ for Alice and Charlie, respectively. When Alice and Charlie start cooperating, Alice transmits $x\in\left\{0, \sqrt{1-\alpha}\right\}$ on $f_{CB}$ for bit $0$ and bit $1$, respectively. Therefore, the baseband symbol received at Charlie is,

\bieee
r_{C} = h_{AC}x + h_{CC}y' + n_{C}, \label{eq:charlie_rx}
\eieee

\noindent where, $h_{AC}$ and $n_{C}$ are the baseband channel for Alice-to-Charlie link and the additive white Gaussian noise (AWGN) at Charlie, distributed as ${\cal CN}(0,\sigma_{AC}^{2})$ and ${\cal CN}(0,N_{o})$, respectively. Further, the Self-Interference (SI) channel between Charlie's transmit and receive antenna is defined by $h_{CC}\sim{\cal CN}\left(0,\Lambda\frac{(1+\alpha)}{2}\right)$\cite{my_TCCN}, such that $\Lambda$ is the degree of SIC achieved by Charlie's FD radio. Finally, $y'$ is the multiplexed symbol transmitted by Charlie chosen from the Charlie's modified constellation, $\mathcal{S}_{C}$. Here, $y'$ is a function of the original symbol from Charlie, i.e., $y\in\{0,1\}$ and the decoded symbol from Alice, $\hat{x}$. After decoding $x$, Charlie uses the following rule for multiplexing; firstly, the ordering of the elements of $\mathcal{S}_{C}$ should follow  \emph{Gray Mapping}. Furthermore, to ensure that he makes minimum compromise while helping Alice, after decoding $\hat{x}=0$, Charlie transmits symbols with energy on the extreme ends. Hence, Charlie uses the following criterion for multiplexing $\hat{x}$ to his transmitting symbols,

\begin{subnumcases}{y'=}
\sqrt{\epsilon_{1}}, & $\hat{x} = 0$, $y = 0$,\label{eq:Rule00}
\\
\sqrt{\alpha\eta_{1}}, & $\hat{x} = 1$, $y = 0$,\label{eq:Rule10}
\\
\sqrt{\alpha\eta_{2}}, & $\hat{x} = 1$, $y = 1$,\label{eq:Rule11}
\\
\sqrt{\epsilon_{2}}, & $\hat{x} = 0$, $y = 1$.\label{eq:Rule01}
\end{subnumcases}

The mapping also includes the power splitting factor $\alpha$ in the transmit symbol of Charlie. Finally, it is evident that $\sqrt{\epsilon_{1}}$ and $\sqrt{\epsilon_{2}}$ should have minimum and maximum energies, respectively, are per the criterion. Moreover, the energy transmitted by Charlie when $y=0$ should be less than when $y=1$, therefore, $\sqrt{\alpha\eta_{1}}<\sqrt{\alpha\eta_{2}}$. Further, due to Gray encoding at Charlie, $\sqrt{\epsilon_{1}}$ and $\sqrt{\alpha\eta_{2}}$ cannot be adjacent neighbours. Therefore, we have  $\sqrt{\epsilon_{1}}<\sqrt{\alpha\eta_{1}}< \sqrt{\alpha\eta_{2}}<\sqrt{\epsilon_{2}}$, where, $\epsilon_{1}, \epsilon_{2}, \eta_{1}, \eta_{2} >0$.
 
Initially, the average power on $f_{CB}$ was $1$, and after NC-F2FD, Alice pours an average power of $0.5\left(1-\alpha\right)$ on $f_{CB}$. Thus, the rest of $0.5\left(1+\alpha\right)$ must be poured by Charlie. Similarly on $f_{AB}$, Alice and Charlie pour $0.5\alpha$ and $0.5\left(1-\alpha\right)$, respectively, to manage an average power equal to $0.5$. Further, based on the signal model of NC-F2FD, if Charlie does not make any error in decoding Alice's symbol, the average power on $f_{AB}$ is such that Dave does not observe any power fluctuations on $f_{AB}$. Now, to achieve the average power constraint on $f_{CB}$, the elements of $\mathcal{S}_{C}$ must follow the equality constraint:

\bieee
\dfrac{1}{4}\left(\epsilon_{1} + \alpha\eta_{1} + \alpha\eta_{2} + \epsilon_{2}\right) &=& \dfrac{1}{2}(1+\alpha).\label{eq:constraint}
\eieee

\noindent Finally, the baseband symbol received at Bob is given by,

\bieee\label{eq:rB}
\mathbf{r}_{B} = \mathbf{h}_{AB}x + \mathbf{h}_{CB}y' + \mathbf{n}_{B},
\eieee
where, $y' = \left\{\sqrt{\epsilon_{1}}, \sqrt{\alpha\eta_{1}}, \sqrt{\alpha\eta_{2}}, \sqrt{\epsilon_{2}}\right\}$ is chosen by Charlie based on the decoded symbol $\hat{x}$ and his original symbol $y$ as given by~\eqref{eq:Rule00}--\eqref{eq:Rule01}. We assume $N_{r}$ receive antennas at Bob, therefore, Alice-to-Bob link and Charlie-to-Bob link are $\mathbf{h}_{AB}\sim{\cal} (\mathbf{0}_{N_{r}},\sigma_{AB}^{2}\mathbf{I}_{N_{r}})$ and $\mathbf{h}_{CB}\sim{\cal} (\mathbf{0}_{N_{r}},\sigma_{CB}^{2}\mathbf{I}_{N_{r}})$, respectively, such that $\sigma_{AB}^{2}=\sigma_{CB}^{2}=1$.  Further, the AWGN at Bob is given by, $\mathbf{n}_{B}\sim{\cal} (\mathbf{0}_{N_{r}},N_{o}\mathbf{I}_{N_{r}})$. We assume all the channel realizations and noise samples are statistically independent. Further, various noise variances at Charlie and Bob are given by $N_{o} = (\text{SNR})^{-1}$. In the next section, we analyse the error performance when decoding Alice's symbols at Charlie as it is important for evaluating the joint error performance at Bob.

\subsection{Error analysis at Charlie}

Charlie computes the threshold, $\tau$ for energy detection based on the received symbol $r_{C}$ in~\eqref{eq:charlie_rx}. The threshold is given by $\tau = \frac{N_{C0}N_{C1}}{N_{C0}-N_{C1}}\ln\left(\frac{N_{C0}}{N_{C1}}\right)$, where, $N_{C0} = N_{o} + 0.5\Lambda(1+\alpha)$ and $N_{C1} = \sigma_{AC}^{2}(1-\alpha) + 0.5\Lambda(1+\alpha)+ N_{o}$ are the energy levels corresponding to the two energy levels at Alice. Due to vicinity of Alice and Charlie, we assume that $\sigma_{AC}^{2}>\sigma_{CB}^{2}$. Now, the decision rule for decoding is, $r_{C}r_{C}^{H} \overset{0}{\underset{1}{\lessgtr}} \tau$. Therefore, the probability of bit $0$ decoded as bit $1$ and vice-versa, is given by $\mathsf{P}_{01} = e^{-\frac{\tau}{N_{C0}}}$ and $\mathsf{P}_{10} = 1-e^{-\frac{\tau}{N_{C1}}}$, respectively. As a result, probability that bit $0$ is correctly decoded at Charlie is, $\mathsf{P_{00}} = 1-\mathsf{P_{01}}$, and the probability of that of bit $1$ is, $\mathsf{P_{11}} = 1 - \mathsf{P_{10}}$. 

Next, we recollect very important results that determine the behaviour of $\mathsf{P_{11}}$, $\mathsf{P_{00}}$, $\mathsf{P_{01}}$, and $\mathsf{P_{10}}$ as a function of $\alpha$.

\begin{lemma}\label{lm:0110}
The error probabilities at Charlie are such that $\mathsf{P_{10}}<\mathsf{P_{11}}$, $\forall\alpha\in(0,\nu)$, where $\nu = \frac{\sigma_{AC}^{2}-N_{o}-\frac{\Lambda}{2}}{\sigma_{AC}^{2}+\frac{\Lambda}{2}}$, and $\mathsf{P_{00}}>\mathsf{P_{01}}$ $\forall\alpha\in(0,1)$.\cite{my_TCCN}
\end{lemma}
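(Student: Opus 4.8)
The plan is to reduce both inequalities to elementary statements about the single ratio $r := N_{C1}/N_{C0}$. First I would observe that, for every $\alpha\in(0,1)$, $N_{C1}-N_{C0}=\sigma_{AC}^{2}(1-\alpha)>0$, so $r>1$; rewriting the threshold with the manifestly positive quantity $N_{C1}-N_{C0}$ in the denominator gives $\tau=\frac{N_{C0}N_{C1}}{N_{C1}-N_{C0}}\ln r>0$, and dividing through yields the two convenient forms $\tau=N_{C0}\,\frac{r\ln r}{r-1}=N_{C1}\,\frac{\ln r}{r-1}$, which I will use repeatedly.

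Next I would translate the two claimed inequalities into threshold conditions. Since $\mathsf{P_{00}}=1-\mathsf{P_{01}}$ with $\mathsf{P_{01}}=e^{-\tau/N_{C0}}$, the inequality $\mathsf{P_{00}}>\mathsf{P_{01}}$ is equivalent to $e^{-\tau/N_{C0}}<1/2$, i.e.\ $\tau>N_{C0}\ln 2$. Likewise $\mathsf{P_{10}}=1-\mathsf{P_{11}}$ with $\mathsf{P_{11}}=e^{-\tau/N_{C1}}$, so $\mathsf{P_{10}}<\mathsf{P_{11}}$ is equivalent to $e^{-\tau/N_{C1}}>1/2$, i.e.\ $\tau<N_{C1}\ln 2$. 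Substituting the two forms of $\tau$ from the previous step, these become the purely scalar inequalities $\frac{r\ln r}{r-1}>\ln 2$ and $\frac{\ln r}{r-1}<\ln 2$, to be studied for $r>1$.

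For the first I would set $k(r)=r\ln r-(r-1)\ln 2$; then $k(1)=0$ and $k'(r)=\ln r+1-\ln 2>0$ for all $r\ge 1$, so $k(r)>0$ for every $r>1$, and since $r>1$ throughout $\alpha\in(0,1)$ this gives $\mathsf{P_{00}}>\mathsf{P_{01}}$ on the whole interval. For the second I would set $h(r)=\ln r-(r-1)\ln 2$; here $h(1)=0$, $h(2)=0$, and $h''(r)=-1/r^{2}<0$, so $h$ is strictly concave and therefore $h(r)<0$ exactly when $r>2$. Hence $\mathsf{P_{10}}<\mathsf{P_{11}}$ iff $N_{C1}>2N_{C0}$; substituting $N_{C0}=N_{o}+\tfrac{\Lambda}{2}(1+\alpha)$ and $N_{C1}=\sigma_{AC}^{2}(1-\alpha)+\tfrac{\Lambda}{2}(1+\alpha)+N_{o}$ and solving the resulting linear inequality in $\alpha$ gives exactly $\alpha<\frac{\sigma_{AC}^{2}-N_{o}-\Lambda/2}{\sigma_{AC}^{2}+\Lambda/2}=\nu$.

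The only genuinely non-trivial point is the concavity/endpoint argument for $h$: one must notice that $r=2$ is a second root of $h$ (so the boundary of the region is the clean condition $N_{C1}=2N_{C0}$ rather than something transcendental) and then use strict concavity together with $h(1)=h(2)=0$ to fix the sign of $h$ on all of $(1,\infty)$. Everything else is sign-bookkeeping — in particular keeping $N_{C1}-N_{C0}>0$ in view so that $\tau>0$ and the division steps are legitimate — plus one linear rearrangement to recover the stated $\nu$. I would also note in passing that $\nu>0$ forces $\sigma_{AC}^{2}>N_{o}+\Lambda/2$, consistent with the modelling assumption that Alice and Charlie are close, so that $\sigma_{AC}^{2}$ is large.
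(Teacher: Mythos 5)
Your argument is correct, and it is worth noting that the paper itself offers no proof of this lemma -- it is imported by citation from \cite{my_TCCN} -- so there is no in-paper derivation to compare against; your write-up is a self-contained substitute. The reduction to the single ratio $r=N_{C1}/N_{C0}$ is clean: the sign bookkeeping ($N_{C1}-N_{C0}=\sigma_{AC}^{2}(1-\alpha)>0$, hence $\tau=N_{C0}\frac{r\ln r}{r-1}=N_{C1}\frac{\ln r}{r-1}>0$) is right, the equivalences $\mathsf{P_{00}}>\mathsf{P_{01}}\Leftrightarrow\frac{r\ln r}{r-1}>\ln 2$ and $\mathsf{P_{10}}<\mathsf{P_{11}}\Leftrightarrow\frac{\ln r}{r-1}<\ln 2$ are correct, and both scalar claims check out: $k(1)=0$ with $k'(r)=\ln r+1-\ln 2>0$ gives the first for all $r>1$, and strict concavity of $h$ together with the double root $h(1)=h(2)=0$ gives $h<0$ precisely on $r>2$. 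The linear rearrangement of $N_{C1}>2N_{C0}$ does land exactly on $\alpha<\nu$ with $\nu=\frac{\sigma_{AC}^{2}-N_{o}-\Lambda/2}{\sigma_{AC}^{2}+\Lambda/2}$. A strength of your route that the lemma statement does not advertise: you actually prove an equivalence, i.e.\ $\mathsf{P_{10}}<\mathsf{P_{11}}$ holds \emph{if and only if} $\alpha<\nu$ (within $(0,1)$), so $\nu$ is the tight threshold rather than merely a sufficient bound; this is consistent with, and slightly sharpens, the cited statement. Your closing observation that $\nu>0$ requires $\sigma_{AC}^{2}>N_{o}+\Lambda/2$ correctly identifies the implicit modelling assumption (strong Alice-to-Charlie link) under which the interval $(0,\nu)$ is nonempty.
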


\begin{lemma}\label{lm:p10p01}
$\mathsf{P_{10}}>\mathsf{P_{01}}$,  $\forall\alpha\in(0,1)$.\cite{my_TCCN}
\end{lemma}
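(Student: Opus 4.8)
The plan is to argue directly from the closed forms $\mathsf{P}_{01}=e^{-\tau/N_{C0}}$ and $\mathsf{P}_{10}=1-e^{-\tau/N_{C1}}$ derived above. The first step is to note that for every $\alpha\in(0,1)$ we have $N_{C1}-N_{C0}=\sigma_{AC}^{2}(1-\alpha)>0$, so $N_{C1}>N_{C0}>0$; since then both $N_{C0}-N_{C1}$ and $\ln(N_{C0}/N_{C1})$ are negative, the threshold $\tau=\frac{N_{C0}N_{C1}}{N_{C0}-N_{C1}}\ln(N_{C0}/N_{C1})$ is strictly positive, and the claimed inequality $\mathsf{P}_{10}>\mathsf{P}_{01}$ becomes equivalent to $e^{-\tau/N_{C0}}+e^{-\tau/N_{C1}}<1$.

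Next I would strip away the individual energy levels via the substitution $t:=\ln(N_{C1}/N_{C0})>0$. Writing $N_{C1}=N_{C0}e^{t}$ and simplifying $\tau$ gives $\tau/N_{C1}=\frac{t}{e^{t}-1}=:u(t)$ and $\tau/N_{C0}=u(t)+t$. The claim then reduces to the single-variable inequality $e^{-(u+t)}+e^{-u}<1$ for all $t>0$; factoring out $e^{-u}$ and taking logarithms, this is equivalent to $u(t)>\ln(1+e^{-t})$.

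To close the argument I would sandwich $u(t)$: on one side $\ln(1+e^{-t})<e^{-t}$ follows from $\ln(1+x)<x$ for $x>0$; on the other side $u(t)=\frac{t}{e^{t}-1}>e^{-t}$ is equivalent to $(t-1)e^{t}+1>0$, which holds because $g(t)=(t-1)e^{t}+1$ satisfies $g(0)=0$ and $g'(t)=te^{t}>0$ for $t>0$. Chaining $u(t)>e^{-t}>\ln(1+e^{-t})$ yields the desired bound, hence $\mathsf{P}_{10}>\mathsf{P}_{01}$ throughout $\alpha\in(0,1)$. I do not expect a genuine obstacle; the only care required is the sign bookkeeping that makes $\tau>0$, and verifying that all the inequalities are strict — which they are, since $N_{C1}>N_{C0}$ strictly on the open interval and therefore $t>0$.
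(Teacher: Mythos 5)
Your argument is correct, and it is worth noting that the paper itself offers no proof of this lemma --- it simply cites \cite{my_TCCN} --- so your derivation is a self-contained replacement rather than a variant of an in-paper argument. Every step checks out: $N_{C1}-N_{C0}=\sigma_{AC}^{2}(1-\alpha)>0$ on the open interval, so $\tau>0$ and the claim is equivalent to $e^{-\tau/N_{C0}}+e^{-\tau/N_{C1}}<1$; with $t=\ln(N_{C1}/N_{C0})>0$ one indeed gets $\tau=N_{C0}\,te^{t}/(e^{t}-1)$, hence $\tau/N_{C1}=t/(e^{t}-1)=u(t)$ and $\tau/N_{C0}=te^{t}/(e^{t}-1)=u(t)+t$; and the reduction to $u(t)>\ln(1+e^{-t})$ followed by the chain $u(t)>e^{-t}>\ln(1+e^{-t})$ is clean (the inequality $t e^{t}>e^{t}-1$ via $g(t)=(t-1)e^{t}+1$, $g(0)=0$, $g'(t)=te^{t}>0$ is exactly right). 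A pleasant feature of your route is that the conclusion depends only on $N_{C1}>N_{C0}$ and the functional form of $\tau$, not on the specific expressions for the noise variances, $\Lambda$, or $\sigma_{AC}^{2}$; it therefore proves the strict inequality uniformly over $\alpha\in(0,1)$ and would survive any re-parametrization of the energy levels that preserves their ordering. The only cosmetic quibble is the phrase ``simplifying $\tau$ gives $\tau/N_{C1}=t/(e^{t}-1)$'': it would be worth displaying the one-line computation $\tau=N_{C0}e^{t}\,t/(e^{t}-1)$ explicitly, since that is where the sign bookkeeping you mention actually happens.
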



\begin{lemma}\label{lm:p11}
For $\alpha\!\in\!(0,1)$, $\mathsf{P_{11}}$ and $\mathsf{P_{00}}$ are decreasing function of $\alpha$, hence, $\mathsf{P_{10}}$ and $\mathsf{P_{01}}$ are increasing function of $\alpha$.\cite{my_TCCN}
\end{lemma}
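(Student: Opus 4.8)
The plan is to collapse the whole question to the monotonicity of a single scalar function. Because $\mathsf{P_{00}}=1-\mathsf{P_{01}}$ and $\mathsf{P_{11}}=1-\mathsf{P_{10}}$, we may write $\mathsf{P_{11}}=e^{-\tau/N_{C1}}$ and $\mathsf{P_{00}}=1-e^{-\tau/N_{C0}}$, so it is enough to prove that $\tau/N_{C1}$ is increasing in $\alpha$ (which forces $\mathsf{P_{11}}$ down) and that $\tau/N_{C0}$ is decreasing in $\alpha$ (which pushes $e^{-\tau/N_{C0}}$ up and hence $\mathsf{P_{00}}$ down). The statements for $\mathsf{P_{10}}=1-\mathsf{P_{11}}$ and $\mathsf{P_{01}}=1-\mathsf{P_{00}}$ are then immediate by complementarity.

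First I would introduce the ratio
\[
t:=\frac{N_{C1}}{N_{C0}}=1+\frac{\sigma_{AC}^{2}(1-\alpha)}{N_{o}+\tfrac{\Lambda}{2}(1+\alpha)},
\]
which satisfies $t>1$ throughout $\alpha\in(0,1)$. Substituting $N_{C1}=tN_{C0}$ into $\tau=\frac{N_{C0}N_{C1}}{N_{C0}-N_{C1}}\ln\!\frac{N_{C0}}{N_{C1}}$ and simplifying (the two sign flips from $N_{C0}-N_{C1}<0$ and $\ln\frac{N_{C0}}{N_{C1}}<0$ cancel) yields the compact forms $\tau/N_{C0}=h(t)$ and $\tau/N_{C1}=g(t)$ with
\[
h(t):=\frac{t\ln t}{t-1},\qquad g(t):=\frac{\ln t}{t-1}.
\]
The key structural fact is that $t$ is \emph{strictly decreasing} in $\alpha$ on $(0,1)$: its numerator $\sigma_{AC}^{2}(1-\alpha)$ decreases while its denominator $N_{o}+\tfrac{\Lambda}{2}(1+\alpha)$ increases. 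Thus the problem is reduced to the monotonicity of $g$ and $h$ on $(1,\infty)$, followed by composition with this decreasing map.

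Then I would do the elementary one-variable analysis. Differentiating, $g'(t)=\big[(1-\tfrac1t)-\ln t\big]/(t-1)^{2}$; the bracket vanishes at $t=1$ and its derivative $(1-t)/t^{2}$ is negative for $t>1$, so the bracket is negative and $g$ is strictly decreasing on $(1,\infty)$. Similarly $h'(t)=\big[(t-1)-\ln t\big]/(t-1)^{2}>0$ for $t>1$ by the standard inequality $\ln t<t-1$, so $h$ is strictly increasing there. Composing: since $t$ decreases in $\alpha$, $\tau/N_{C1}=g(t)$ increases in $\alpha$, hence $\mathsf{P_{11}}=e^{-\tau/N_{C1}}$ decreases; and $\tau/N_{C0}=h(t)$ decreases in $\alpha$, hence $e^{-\tau/N_{C0}}$ increases and $\mathsf{P_{00}}=1-e^{-\tau/N_{C0}}$ decreases. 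The remaining two claims follow because $\mathsf{P_{10}}$ and $\mathsf{P_{01}}$ are their complements.

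I expect the main obstacle to be bookkeeping rather than depth: getting $\tau$ into the $g(t)$/$h(t)$ form without a sign error, and — more subtly — keeping straight the direction of the composition, since $g$ and $h$ are evaluated at an argument $t$ that moves \emph{downward} as $\alpha$ grows, so the monotonicity of each composite is opposite to that of $g$ and $h$ themselves. The only analytic inputs are the textbook bounds $1-\tfrac1t<\ln t<t-1$ for $t>1$, so no heavier machinery is needed.
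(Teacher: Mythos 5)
Your argument is correct, and it is worth noting that the paper itself does not prove this lemma at all --- it is quoted from the cited prior work [my\_TCCN] --- so there is no in-paper proof to compare against. Your reduction is clean and checks out: with $t=N_{C1}/N_{C0}>1$ one indeed gets $\tau/N_{C0}=\frac{t\ln t}{t-1}$ and $\tau/N_{C1}=\frac{\ln t}{t-1}$, the map $\alpha\mapsto t$ is strictly decreasing (positive decreasing numerator over positive increasing denominator), and the monotonicity of $g$ and $h$ on $(1,\infty)$ follows from the standard bounds $1-\tfrac1t<\ln t<t-1$; composing with the decreasing $t(\alpha)$ gives exactly the claimed directions for $\mathsf{P_{11}}=e^{-\tau/N_{C1}}$ and $\mathsf{P_{00}}=1-e^{-\tau/N_{C0}}$, and the statements for $\mathsf{P_{10}}$ and $\mathsf{P_{01}}$ are immediate complements. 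The only implicit assumption you are using is $\sigma_{AC}^{2}>0$ and $\Lambda\ge 0$, which the system model guarantees, so the proof stands as a self-contained justification of a result the paper merely cites.
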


\begin{rem}\label{rem:p00p11}
For a given SNR and $N_{r}$, the approximation $\ln\frac{\mathsf{P_{11}}}{\mathsf{P_{00}}}\approx 0$ is tight when $\alpha$ is away from $1$.
\end{rem}

\subsection{Error Analysis at Bob}

\begin{figure}
\centering
\includegraphics[scale=0.2]{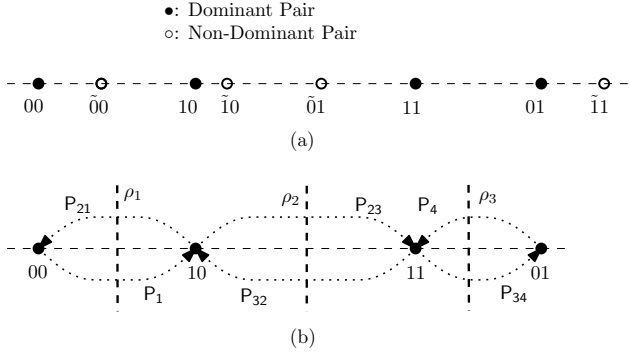}
\caption{\label{fig:constellation}(a) Constellation symbols received at Bob. (b) Simplified constellation as result of dominant error terms.}
\end{figure}

Bob uses joint Maximum A Posteriori (MAP) decoder to perform non-coherent energy detection and jointly decode Alice's and Charlie's symbols from $\mathbf{r}_{B}$ in~\eqref{eq:rB}. The joint MAP decoder at Bob is defined as,

\bieee\label{eq:MAP}
\hat{i},\hat{j} = \underset{i,j\in \{0,1\}}{\arg \max}\ f\left(\left.\mathbf{r}_{B}\right\vert x=i,y=j\right),
\eieee

\begin{figure*}[!btp]
\begin{small}
\bieee
f\!\left(\left.\!\mathbf{r}_{B}\right\vert x\!=\!0,y\!=\!0\right)\! &=&\! \mathsf{P}_{00}g\!\left(\left.\!\mathbf{r}_{B}\right\vert x\!=\!0,y'\!=\!\sqrt{\!\epsilon_{1}}\right)\! +\! \mathsf{P}_{01}g\left(\left.\mathbf{r}_{B}\right\vert x=0,y'=\sqrt{\!\alpha\eta_{1}}\right)\!\approx\!\mathsf{P}_{00}g\left(\left.\mathbf{r}_{B}\right\vert x\!=\!0,y'\!=\!\sqrt{\!\epsilon_{1}}\right)\!\triangleq\! f_{D}\!\left(\left.\!\mathbf{r}_{B}\right\vert x\!=\!0,y\! =\!0\right)\IEEEeqnarraynumspace \label{eq:f00}\\
f\!\left(\left.\!\mathbf{r}_{B}\right\vert x\!=\!0,y\!=\!1\right)\! &=&\! \mathsf{P}_{00}g\!\left(\left.\!\mathbf{r}_{B}\right\vert x\!=\!0,y'\!=\!\sqrt{\!\epsilon_{2}}\right)\! +\! \mathsf{P}_{01}g\left(\left.\mathbf{r}_{B}\right\vert x=0,y'=\sqrt{\!\alpha\eta_{2}}\right)\!\approx\!\mathsf{P}_{00}g\left(\left.\mathbf{r}_{B}\right\vert x\!=\!0,y'\!=\!\sqrt{\!\epsilon_{2}}\right)\!\triangleq\! f_{D}\!\left(\left.\!\mathbf{r}_{B}\right\vert x\!=\!0,y\! =\!1\right)\IEEEeqnarraynumspace \label{eq:f01}\\
f\!\left(\left.\!\mathbf{r}_{B}\right\vert x\!=\!1,y\!=\!0\right)\! &=&\! \mathsf{P}_{11}g\!\left(\left.\!\mathbf{r}_{B}\right\vert x\!=\!1,y'\!=\!\sqrt{\!\alpha\eta_{1}}\right)\! +\! \mathsf{P}_{10}g\left(\left.\mathbf{r}_{B}\right\vert x=1,y'=\sqrt{\!\epsilon_{1}}\right)\!\approx\!\mathsf{P}_{11}g\left(\left.\mathbf{r}_{B}\right\vert x\!=\!1,y'\!=\!\sqrt{\!\alpha\eta_{1}}\right)\!\triangleq\! f_{D}\!\left(\left.\!\mathbf{r}_{B}\right\vert x\!=\!1,y\! =\!0\right)\IEEEeqnarraynumspace \label{eq:f10}\\
f\!\left(\left.\!\mathbf{r}_{B}\right\vert x\!=\!1,y\!=\!1\right)\! &=&\! \mathsf{P}_{11}g\!\left(\left.\!\mathbf{r}_{B}\right\vert x\!=\!1,y'\!=\!\sqrt{\!\alpha\eta_{2}}\right)\! +\! \mathsf{P}_{10}g\left(\left.\mathbf{r}_{B}\right\vert x=1,y'=\sqrt{\!\epsilon_{2}}\right)\!\approx\!\mathsf{P}_{11}g\left(\left.\mathbf{r}_{B}\right\vert x\!=\!1,y'\!=\!\sqrt{\!\alpha\eta_{2}}\right)\!\triangleq\! f_{D}\!\left(\left.\!\mathbf{r}_{B}\right\vert x\!=\!1,y\! =\!1\right)\IEEEeqnarraynumspace \label{eq:f11}
\eieee
\end{small}
\hrulefill
\end{figure*}

\noindent where, $f\left(\left.\mathbf{r}_{B}\right\vert x=i,y=j\right)$ is the probability density function (pdf) of $\mathbf{r}_{B}$ conditioned on the realizations of $x$ and $y$. The pdf described in \eqref{eq:MAP} is a Gaussian mixture as shown in~\eqref{eq:f00}--\eqref{eq:f11}, where, $g(\cdot)$ denotes the pdf of $\mathbf{r}_{B}$ conditioned on $x$, $\hat{x}$, and $y$. Additionally, the first equality of~\eqref{eq:f00}--\eqref{eq:f11} provide realizations of $f(\cdot)$, for various combinations of $i$ and $j$. Therefore, to obtain the average probability of error at Bob using the joint MAP decoder, one needs to solve pairwise error probability that include Gaussian mixtures. However, solving  Gaussian mixtures for computing the average probability of error is non-trivial. Therefore, we propose the Joint Dominant Decoder (JDD), an approximate joint MAP decoder that considers only the dominant terms in the Gaussian mixtures of~\eqref{eq:f00}--\eqref{eq:f11}. Using Lemma~\ref{lm:0110}, JDD is defined as,

\bieee\label{eq:JD}
\hat{i},\hat{j} = \underset{i,j\in \{0,1\}}{\arg \max}\ f_{D}\left(\left.\mathbf{r}_{B}\right\vert x=i,y=j\right),
\eieee

\noindent where, we dropped the coefficients of $\mathsf{P}_{01}$ and $\mathsf{P}_{10}$ in~\eqref{eq:f00}--\eqref{eq:f11} to obtain the approximation $f_{D}(\cdot)$. In the next section, we compute the probability of various error events when using JDD.

\section{NC-F2FD Joint Dominant Decoder} 

In this section, we use JDD defined in the previous section to compute the  average probability of error at Bob. A joint error event is defined as an error at Bob in decoding a transmitted pair $(i\ j)$ as either $(\overline{i}\ j)$, $(i\ \overline{j})$ or $(\overline{i}\ \overline{j})$, where, $i$ and $j$ denote the symbols transmitted by Alice and Charlie, respectively, such that $i,j\in\left\{0,1\right\}$. Further, $\overline{i}$ and $\overline{j}$ denote the complement of $i$ and $j$, respectively. Therefore, we have 4 pairs corresponding to $(i,j)\in\{0,1\}\times\{0,1\}$, and Bob receives symbol corresponding to one of these pairs. However, due to decode-and-forward relaying scheme, a transmitted symbol $x$ from Alice can be decoded with error at Charlie as well as at Bob. We first consider error events at Charlie, and assume that Alice transmits $x=1$ for exposition. Subsequently, we assume that Charlie decodes it incorrectly as $0$, and then he chooses either $y' = \sqrt{\epsilon_{1}}$ or $\sqrt{\epsilon_{2}}$, instead of $y' = \sqrt{\alpha\eta_{1}}$ or $y' = \sqrt{\alpha\eta_{2}}$. Similarly, when Alice transmits $x=0$, and if Charlie decodes it as $1$, he transmits $y' = \sqrt{\alpha\eta_{1}}$ or $y' = \sqrt{\alpha\eta_{2}}$ rather than $y' = \sqrt{\epsilon_{1}}$ or $\sqrt{\epsilon_{2}}$. Meanwhile, when $x=1$ or $x=0$, Alice contributes $1-\alpha$ or \emph{zero} power, respectively, to the received symbol at Bob. Therefore, we have a total of $8$ symbols received at Bob, $4$ each for Charlie's correct and incorrect decisions. In~\figurename{\ref{fig:constellation}~(a)}, we present the energy levels received at Bob corresponding to $\left\{(00),(10),(11),(01)\right\}$ as well as $\left\{(\overline{0}0), (\overline{0}1), (\overline{1}0), (\overline{1}1)\right\}$. Formally, based on~\eqref{eq:Rule00}--\eqref{eq:Rule01} and \eqref{eq:rB}, the received symbol at Bob is a function of $x$, $\hat{x}$ and $y$, hence distributed as,

\begin{equation}\label{eq:rB1}
\mathbf{r}_{B}\!\sim\!
\begin{cases}
 {\cal CN}\left(\mathbf{0}_{N_{r}}, \upsilon_{00}\mathbf{I}_{N_{r}}\right),   & x=0,\hat{x}=0,y = 0;
\\
{\cal CN}\left(\mathbf{0}_{N_{r}}, \upsilon_{01}\mathbf{I}_{N_{r}}\right), & x=0,\hat{x}=0,y=1;
\\
 {\cal CN}\left(\mathbf{0}_{N_{r}}, \upsilon_{\overline{0}0}\mathbf{I}_{N_{r}}\right), &  x=0,\hat{x}=1,y=0;
\\
{\cal CN}\left(\mathbf{0}_{N_{r}}, \upsilon_{\overline{0}1}\mathbf{I}_{N_{r}}\right), &  x=0,\hat{x}=1,y=1;
\\
{\cal CN}\left(\mathbf{0}_{N_{r}}, \upsilon_{\overline{1}0}\mathbf{I}_{N_{r}}\right),  &  x=1,\hat{x}=0,y=0;
\\
{\cal CN}\left(\mathbf{0}_{N_{r}}, \upsilon_{\overline{1}1}\mathbf{I}_{N_{r}}\right), & x=1,\hat{x}=0,y=1;
\\
 {\cal CN}\left(\mathbf{0}_{N_{r}}, \upsilon_{10}\mathbf{I}_{N_{r}}\right), &  x=1,\hat{x}=1,y=0;
\\
{\cal CN}\left(\mathbf{0}_{N_{r}}, \upsilon_{11}\mathbf{I}_{N_{r}}\right), &  x=1,\hat{x}=1,y=1.
\end{cases}
\end{equation}

\noindent where, $\upsilon_{00} = \epsilon_{1} + N_{o}$, $\upsilon_{10} = 1-\alpha + \alpha\eta_{1}+N_{o}$, $\upsilon_{11} = 1-\alpha + \alpha\eta_{2}+N_{o}$, $\upsilon_{01} = \epsilon_{2} + N_{o}$, $\upsilon_{\overline{1}0} = 1-\alpha + \epsilon_{1} + N_{o}$, $\upsilon_{\overline{1}1} = 1-\alpha + \epsilon_{2} + N_{o}$, $\upsilon_{\overline{0}1} =  \alpha\eta_{2} + N_{o}$, and $\upsilon_{\overline{0}0} =  \alpha\eta_{1} + N_{o}$, are the variances of the corresponding received symbols at Bob, representing the received pair.

It is well known that the decision statistics for non-coherent energy detection, is given by, $\mathbf{r}_{B}\mathbf{r}_{B}^{H}\sim \text{Gamma}(N_{r},\upsilon)$ where, $\text{Gamma}(\cdot,\cdot)$ is the Gamma distribution\cite{Ranjan} and $\upsilon$ is the variance of the received symbol. Therefore, based on the encoding rule at Charlie, the received energy $\mathbf{r}_{B}\mathbf{r}_{B}^{H}$, follow the order as shown in \figurename{\ref{fig:constellation}~(a)}. We now use JDD that only considers the dominant pairs $\left\{(00),(10),(11),(01)\right\}$ for error computation by neglecting the non-dominant pairs arising due to Charlie's erroneous decisions, i.e., $\left\{(\overline{0}0), (\overline{0}1), (\overline{1}0), (\overline{1}1)\right\}$. The simplified constellation as received by JDD is shown in \figurename{\ref{fig:constellation}~(b)}.
 
To analyse the error events, let a pair $(i\ j)$ be transmitted and decoded as $(i^{*}\ j^{*})$, where, $(i^{*}\ j^{*})\in\{0,1\}\times\{0,1\}$. Using JDD defined in~\eqref{eq:JD}, a pair is in error when $(i\ j)\neq(i^{*}\ j^{*})$, and is defined as, $\Delta_{(i\ j)\rightarrow(i^{*}\ j^{*})}$.

\bieee
\Delta_{(i\ j)\rightarrow(i^{*}\ j^{*})} = \dfrac{f_{D}\left(\left.\mathbf{r}_{B}\right\vert x= i, y=j\right)}{f_{D}\left(\left.\mathbf{r}_{B}\right\vert x= i^{*}, y=j^{*}\right)} \leq 1,\nn
\eieee

\noindent where, $f_{D}\left(\cdot\right)$ is defined in \eqref{eq:f00}-\eqref{eq:f11} for each transmitted pair $(i\ j)$.
Although, JDD considers only the dominant pairs for error computation, the received symbols at Bob could be any one of the non-dominant pairs too. Therefore, the event $\Delta_{(i\ j)\rightarrow(i^{*}\ j^{*})}$ is conditioned on whether the decision statistics, $\mathbf{r}_{B}\mathbf{r}_{B}^{H}$ at Bob corresponds to a dominant pair or a non-dominant pair, i.e., $``\left.\Delta_{(i\ j)\rightarrow(i^{*}\ j^{*})}\right\vert\upsilon_{(i\ j)}"$ or $``\left.\Delta_{(i\ j)\rightarrow(i^{*}\ j^{*})}\right\vert\upsilon_{(\overline{i}\ j)}"$, respectively. Here, $\left.\Delta_{(i\ j)\rightarrow(i^{*}\ j^{*})}\right\vert\upsilon_{(\overline{i}\ j)}$ is known as the counterpart of $\left.\Delta_{(i\ j)\rightarrow(i^{*}\ j^{*})}\right\vert\upsilon_{(i\ j)}$ that arises due to error at Charlie. Thus, we formally define an error event as,

\bieee
\Pr((i\ j)\rightarrow(i^{*}\ j^{*})) &=& \mathsf{P}_{i\ i}\Pr\left(\left.\Delta_{(i\ j)\rightarrow(i^{*}\ j^{*})}\right\vert \upsilon_{i\ j}\right)+ \nn \\
&\ & \mathsf{P}_{i\ \overline{i}}\Pr\left(\left.\Delta_{(i\ j)\rightarrow(i^{*}\ j^{*})}\right\vert \upsilon_{\overline{i}\ j}\right).\label{eq:error_event}
\eieee

Further, since the decision statistics $\mathbf{r}_{B}\mathbf{r}_{B}^{H}\in\mathbb{R}^{+}$, for a given pair $(i\ j)$, we consider its adjacent neighbours for error computation. Therefore, from \figurename{\ref{fig:constellation}~(b)}, all the possible error events are, $(00)\rightarrow(10)$, $(10)\rightarrow(00)$, $(10)\rightarrow(11)$, $(11)\rightarrow(10)$, $(11)\rightarrow(01)$ and $(01)\rightarrow(11)$, and the probability of each error event is,  $\mathsf{P_{1}}:\Pr\left((00)\rightarrow(10)\right)$; $\mathsf{P_{21}}:\Pr\left((10)\rightarrow(00)\right)$; $\mathsf{P_{23}}:\Pr\left((10)\rightarrow(11)\right)$; $\mathsf{P_{32}}:\Pr\left((11)\rightarrow(10)\right)$; $\mathsf{P_{34}}:\Pr\left((11)\rightarrow(01)\right)$; and $\mathsf{P_{4}}:\Pr\left((01)\rightarrow(11)\right)$. Note that we also have a counterpart associated with each error event whose probabilities are given by $\mathsf{P_{1}^{C}}$, $\mathsf{P_{21}^{C}}$, $\mathsf{P_{34}^{C}}$, and $\mathsf{P_{4}^{C}}$, similar to \eqref{eq:error_event}. 

Next, to compute the probability of error between dominant pairs, we need to compute the detection thresholds as shown in~\figurename{\ref{fig:constellation}~(b)}. Let $\rho_{1}$, $\rho_{2}$, and $\rho_{3}$ be the 3 thresholds between the 4 dominant pairs. Using first principles, the expression for $\rho_{1}$, $\rho_{2}$, and $\rho_{3}$ can be computed as,

\bieee
\rho_{1} &=& \frac{\upsilon_{00}\upsilon_{10}}{\upsilon_{00}-\upsilon_{10}}\left[N_{r}\ln\left(\frac{\upsilon_{00}}{\upsilon_{10}}\right) + \ln\left(\frac{\mathsf{P}_{11}}{\mathsf{P}_{00}}\right)\right]\approx\ \ \ \ \ \nn\\
&&\hspace{1.3in} N_{r}\frac{\upsilon_{00}\upsilon_{10}}{\upsilon_{00}-\upsilon_{10}}\ln\left(\frac{\upsilon_{00}}{\upsilon_{10}}\right),\label{eq:rho1}\\
\rho_{2} &=& N_{r}\frac{\upsilon_{10}\upsilon_{11}}{\upsilon_{10}-\upsilon_{11}}\ln\left(\frac{\upsilon_{10}}{\upsilon_{11}}\right),\label{eq:rho2}\\
\rho_{3} &=& \frac{\upsilon_{11}\upsilon_{01}}{\upsilon_{11}-\upsilon_{01}}\left[N_{r}\ln\left(\frac{\upsilon_{11}}{\upsilon_{01}}\right) + \ln\left(\frac{\mathsf{P}_{00}}{\mathsf{P}_{11}}\right)\right] \approx\ \ \ \ \ \nn\\
&&\hspace{1.25in} N_{r}\frac{\upsilon_{11}\upsilon_{01}}{\upsilon_{11}-\upsilon_{01}}\ln\left(\frac{\upsilon_{11}}{\upsilon_{01}}\right).\label{eq:rho3}
\eieee

In practice, Bob can receive both dominant as well as non-dominant pairs. However, JDD assumes that Bob receives only dominant pairs and computes the detection thresholds using the dominant terms. Therefore, for computing the error when Bob receives non-dominant pairs, i.e., $\mathsf{P_{1}^{C}}$, $\mathsf{P_{21}^{C}}$, $\mathsf{P_{34}^{C}}$, and $\mathsf{P_{4}^{C}}$, we use the same detection thresholds, i.e., $\rho_{1}$, $\rho_{2}$, and $\rho_{3}$. In the following proposition, we present probability of error event $(00)\rightarrow(10)$, i.e., $\Pr((00)\rightarrow(10))$.
 
\begin{proposition}\label{pr:p1}

\bieee
\Pr((00)\rightarrow(10)) &=& \mathsf{P}_{00}\Pr\left(\left.\Delta_{(00)\rightarrow(10)}\right\vert \upsilon_{00}\right)+ \nn \\
&\ & \mathsf{P}_{01}\Pr\left(\left.\Delta_{(00)\rightarrow(10)}\right\vert \upsilon_{\overline{0}0}\right)\nn\\
&=& \mathsf{P}_{00}\mathsf{P}_{1} + \mathsf{P}_{01}\mathsf{P_{1}^{C}}\nn,
\eieee

\noindent where, $\mathsf{P}_{1} = \frac{\Gamma\left(N_{r}, \frac{\rho_{1}}{\upsilon_{00}}\right)}{\Gamma\left(N_{r}\right)}$ and $\mathsf{P_{1}^{C}} = \frac{\Gamma\left(N_{r}, \frac{\rho_{1}}{\upsilon_{\overline{0}0}}\right)}{\Gamma\left(N_{r}\right)}$.
\end{proposition}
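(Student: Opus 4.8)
The plan is to obtain $\Pr((00)\rightarrow(10))$ directly from the generic error-event decomposition~\eqref{eq:error_event}, specialised to the transmitted pair $(i,j)=(0,0)$ and the decoded pair $(i^{*},j^{*})=(1,0)$. For $i=0$ one has $\mathsf{P}_{i\,i}=\mathsf{P}_{00}$ and $\mathsf{P}_{i\,\overline{i}}=\mathsf{P}_{01}$, so the first equality in the statement is immediate from~\eqref{eq:error_event}; the real content is to show that the conditional probability $\Pr(\left.\Delta_{(00)\rightarrow(10)}\right\vert\upsilon_{00})$ equals $\mathsf{P}_{1}$ and that its counterpart $\Pr(\left.\Delta_{(00)\rightarrow(10)}\right\vert\upsilon_{\overline{0}0})$ equals $\mathsf{P_{1}^{C}}$.

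First I would make the JDD decision region explicit. By~\eqref{eq:JD} together with the dominant densities~\eqref{eq:f00} and~\eqref{eq:f10}, the event $\Delta_{(00)\rightarrow(10)}$ is $\{f_{D}(\mathbf{r}_{B}\vert x=0,y=0)\le f_{D}(\mathbf{r}_{B}\vert x=1,y=0)\}$, i.e. $\mathsf{P}_{00}\,g(\mathbf{r}_{B}\vert x=0,y'=\sqrt{\epsilon_{1}})\le \mathsf{P}_{11}\,g(\mathbf{r}_{B}\vert x=1,y'=\sqrt{\alpha\eta_{1}})$. Taking logarithms of the two zero-mean, circularly-symmetric complex Gaussian densities (per-branch variances $\upsilon_{00}=\epsilon_{1}+N_{o}$ and $\upsilon_{10}=1-\alpha+\alpha\eta_{1}+N_{o}$) turns this into an affine inequality in the energy statistic $\mathbf{r}_{B}\mathbf{r}_{B}^{H}$; since $\upsilon_{00}<\upsilon_{10}$ (from $\epsilon_{1}<\alpha\eta_{1}$ and $1-\alpha>0$) and $(00)$ is the left-most point of the simplified constellation in \figurename~\ref{fig:constellation}(b), the slower-decaying $(10)$ density wins precisely in the upper tail, so $\Delta_{(00)\rightarrow(10)}=\{\mathbf{r}_{B}\mathbf{r}_{B}^{H}>\rho_{1}\}$ with $\rho_{1}$ the first-principles threshold~\eqref{eq:rho1}, the $\ln(\mathsf{P}_{11}/\mathsf{P}_{00})$ contribution being dropped via Remark~\ref{rem:p00p11}.

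Next I would evaluate the two conditional tail probabilities. Conditioned on Charlie decoding correctly ($\hat{x}=0$), \eqref{eq:rB1} gives $\mathbf{r}_{B}\sim\mathcal{CN}(\mathbf{0}_{N_{r}},\upsilon_{00}\mathbf{I}_{N_{r}})$, hence $\mathbf{r}_{B}\mathbf{r}_{B}^{H}\sim\text{Gamma}(N_{r},\upsilon_{00})$; the complementary CDF of the Gamma distribution, written through the upper incomplete Gamma function, then yields $\Pr(\mathbf{r}_{B}\mathbf{r}_{B}^{H}>\rho_{1})=\Gamma(N_{r},\rho_{1}/\upsilon_{00})/\Gamma(N_{r})=\mathsf{P}_{1}$. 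For the counterpart, Charlie's error ($\hat{x}=1$ while $x=0$, $y=0$) changes the received variance to $\upsilon_{\overline{0}0}=\alpha\eta_{1}+N_{o}$, but JDD still tests against the same threshold $\rho_{1}$; repeating the computation with $\upsilon_{\overline{0}0}$ in place of $\upsilon_{00}$ gives $\Pr(\left.\Delta_{(00)\rightarrow(10)}\right\vert\upsilon_{\overline{0}0})=\Gamma(N_{r},\rho_{1}/\upsilon_{\overline{0}0})/\Gamma(N_{r})=\mathsf{P_{1}^{C}}$. Substituting both back into the specialised~\eqref{eq:error_event} closes the argument.

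The step I expect to be the main obstacle is the region reduction in the second paragraph: one must verify that the log-likelihood-ratio test between the two dominant Gaussian terms collapses to a \emph{single} one-sided threshold and in the \emph{correct} direction (upper tail, not lower), which relies on the variance ordering $\upsilon_{00}<\upsilon_{\overline{0}0}<\upsilon_{10}$ forced by the constellation constraints $\sqrt{\epsilon_{1}}<\sqrt{\alpha\eta_{1}}<\sqrt{\alpha\eta_{2}}<\sqrt{\epsilon_{2}}$ and on the validity of the approximation $\ln(\mathsf{P}_{11}/\mathsf{P}_{00})\approx0$ from Remark~\ref{rem:p00p11} that lets the generic threshold in~\eqref{eq:rho1} be used for both the dominant and the counterpart terms. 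Once the region is pinned down as $\{\mathbf{r}_{B}\mathbf{r}_{B}^{H}>\rho_{1}\}$, everything else is a routine Gamma-tail evaluation.
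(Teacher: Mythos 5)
Your proposal is correct and follows essentially the route the paper intends (the paper states Proposition~\ref{pr:p1} without an explicit proof): specialise the decomposition in~\eqref{eq:error_event} to $(i,j)=(0,0)$, reduce the dominant-likelihood comparison to the single upper-tail test $\mathbf{r}_{B}\mathbf{r}_{B}^{H}>\rho_{1}$ using the variance ordering and Remark~\ref{rem:p00p11}, and evaluate the Gamma tails at $\rho_{1}/\upsilon_{00}$ and $\rho_{1}/\upsilon_{\overline{0}0}$. Your explicit verification that the log-likelihood-ratio test collapses to the one-sided threshold in~\eqref{eq:rho1}, and that the same threshold is reused for the counterpart term, is exactly the content the paper leaves implicit.
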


On similar lines, we compute all the error events and tabulate them in~\tablename{~\ref{tab:Pe_tab}}.

\begin{table}[h]
\begin{center}
\begin{tabular}{ | m{2.3cm}| m{5cm} | } 
\hline
Error Event  & \multicolumn{1}{c|}{Probability}  \\ 
\hline
$\Pr\left((00)\rightarrow(10)\right)$ & $\mathsf{P}_{00}\mathsf{P}_{1} + \mathsf{P}_{01}\mathsf{P_{1}^{C}}$, where $\mathsf{P}_{1} = \frac{\Gamma\left(N_{r}, \frac{\rho_{1}}{\upsilon_{00}}\right)}{\Gamma\left(N_{r}\right)}$ and $\mathsf{P_{1}^{C}} = \frac{\Gamma\left(N_{r}, \frac{\rho_{1}}{\upsilon_{\overline{0}0}}\right)}{\Gamma\left(N_{r}\right)}$ \\ 
\hline
$\Pr\left((10)\rightarrow(00)\right)$ & $\mathsf{P}_{11}\mathsf{P}_{21} + \mathsf{P}_{10}\mathsf{P_{21}^{C}}$, where $\mathsf{P}_{21}=\frac{\gamma\left(N_{r}, \frac{\rho_{1}}{\upsilon_{10}}\right)}{\Gamma\left(N_{r}\right)}$ and $\mathsf{P_{21}^{C}} = \frac{\gamma\left(N_{r}, \frac{\rho_{1}}{\upsilon_{\overline{1}0}}\right)}{\Gamma\left(N_{r}\right)}$ \\ 
\hline
$\Pr\left((10)\rightarrow(11)\right)$ & $\mathsf{P}_{11}\mathsf{P}_{23}$, where $\mathsf{P}_{23}=\frac{\Gamma\left(N_{r}, \frac{\rho_{2}}{\upsilon_{10}}\right)}{\Gamma\left(N_{r}\right)}$ \\ 
\hline
$\Pr\left((11)\rightarrow(10)\right)$ & $\mathsf{P}_{11}\mathsf{P}_{32}$, where $\mathsf{P}_{32}=\frac{\gamma\left(N_{r}, \frac{\rho_{2}}{\upsilon_{11}}\right)}{\Gamma\left(N_{r}\right)}$ \\ 
\hline
$\Pr\left((11)\rightarrow(01)\right)$ & $\mathsf{P}_{11}\mathsf{P}_{34} + \mathsf{P}_{10}\mathsf{P_{34}^{C}}$, where $\mathsf{P}_{34}=\frac{\Gamma\left(N_{r}, \frac{\rho_{3}}{\upsilon_{11}}\right)}{\Gamma\left(N_{r}\right)}$ and $\mathsf{P_{34}^{C}} = \frac{\gamma\left(N_{r}, \frac{\rho_{3}}{\upsilon_{\overline{1}1}}\right)}{\Gamma\left(N_{r}\right)}$ \\ 
\hline
$\Pr\left((01)\rightarrow(11)\right)$ & $\mathsf{P}_{00}\mathsf{P}_{4} + \mathsf{P}_{01}\mathsf{P_{4}^{C}}$, where $\mathsf{P}_{4} = \frac{\Gamma\left(N_{r}, \frac{\rho_{3}}{\upsilon_{01}}\right)}{\Gamma\left(N_{r}\right)}$ and $\mathsf{P_{4}^{C}} = \frac{\Gamma\left(N_{r}, \frac{\rho_{3}}{\upsilon_{\overline{0}1}}\right)}{\Gamma\left(N_{r}\right)}$  \\ 
\hline
\end{tabular}
\end{center}
\caption{\label{tab:Pe_tab}Error events with their respective probabilities.}
\end{table}

Considering equally likely information symbols at Alice and Charlie, the average probability of error at Bob is given by $\mathsf{P_{e}}$,

\bieee
\mathsf{P_{e}} \!\!=\!\!\dfrac{1}{|\mathcal{S}_{C}|}\sum_{i=0}^{1}\sum_{j=0}^{1}\Pr((i\ j)\rightarrow(\overline{i}\ j))\!+\!\Pr((i\ j)\rightarrow(\overline{i}\ \overline{j}))\label{eq:pe},\IEEEeqnarraynumspace
\eieee

\noindent where, the events $(00)\rightarrow(11)$ and $(11)\rightarrow(00)$ are invalid. Now substituting the probability computed for each error event from~\tablename{~\ref{tab:Pe_tab}}, and upper-bounding the complementary error event, i.e., $\mathsf{P_{1}^{C}}$, $\mathsf{P_{21}^{C}}$, $\mathsf{P_{34}^{C}}$, and $\mathsf{P_{4}^{C}}$ by $1$, an upper bound on the average probability is given by,

\begin{multline}\label{eq:pe1}
\mathsf{P}_{e}\!\leq\!\mathsf{P_{e}^{\star}}\!=\! \dfrac{1}{4}\left[\mathsf{P}_{00}\left(\mathsf{P_{1}}+\mathsf{P_{4}}\right) + 2\mathsf{P_{01}} + 2\mathsf{P_{10}} + \right. \\
\ \left. \mathsf{P_{11}}\left(\mathsf{P_{21}} + \mathsf{P_{23}}+\mathsf{P_{32}} + \mathsf{P_{34}}\right)\right].
\end{multline}

In the next section, we find the values of $\left\{\epsilon_{1}, \epsilon_{2}, \eta_{1}, \eta_{2}, \alpha\right\}$ that minimizes $\mathsf{P_{e}^{\star}}$. Interestingly, we can visualise $\sqrt{\epsilon_{1}}$, $\sqrt{\epsilon_{2}}$, $\sqrt{\alpha\eta_{1}}$, and $\sqrt{\alpha\eta_{2}}$ as the modified constellation symbols at Charlie, such that each symbol contains information about Alice's as well as Charlie's symbol. Thus, we can identify this work as a constellation design problem for the NC-F2FD relaying scheme.

\section{Optimization of Amplitude Levels}

We now formulate the optimization problem to synthesize the amplitude levels $\sqrt{\epsilon_{1}}$, $\sqrt{\epsilon_{2}}$, $\sqrt{\alpha\eta_{1}}$, and $\sqrt{\alpha\eta_{2}}$, subject to the average power constraint at Charlie. Our formulation minimises the joint probability of error at Bob as given by,

\begin{equation}\label{eq:opt}
\begin{aligned}
\min_{\epsilon_{1}, \epsilon_{2}, \eta_{1}, \eta_{2}, \alpha} \quad & \mathsf{P_{e}^{\star}}\\
\textrm{s.t.} \quad & \dfrac{1}{4}\left(\epsilon_{1} + \alpha\eta_{1} + \alpha\eta_{2} + \epsilon_{2}\right) = \dfrac{1}{2}(1+\alpha),\\
  & 0<\alpha<1,   \\
\end{aligned}
\end{equation}

\noindent where, $\mathsf{P_{e}^{\star}}$ is the upper-bound on the average probability of error obtained using JDD. To solve the optimization problem in~\eqref{eq:opt}, one can solve a multivariate Lagrangian equation. However, it is well known that solving a multivariate Lagrangian equation is hard. Therefore, in the next section, we take a different approach towards obtaining $\left\{\epsilon_{1}, \epsilon_{2}, \eta_{1}, \eta_{2}, \alpha\right\}$ by analysing $\mathsf{P_{e}^{\star}}$ using some non-trivial relations.

%
%

\subsection{Variation of $\mathsf{P_{e}^{\star}}$}

In this section, we analytically study the behaviour of the objective function by varying only a subset of the variables amongst $\left\{\epsilon_{1}, \epsilon_{2}, \eta_{1}, \eta_{2}, \alpha\right\}$, while keeping the rest of them fixed. We aim to reduce the dimensionality of the search space $\left\{\epsilon_{1}, \epsilon_{2}, \eta_{1}, \eta_{2}, \alpha\right\}$, thereby reducing the implementation complexity.

\begin{rem}
From Proposition~\ref{pr:p1}, $\mathsf{P_{e}^{\star}}$ is minimum when $\upsilon_{00}$ is minimum, i.e., $\epsilon_{1}=0$, for any $\eta_{1}, \eta_{2}, \epsilon_{2}, \alpha$ satisfying the constraint in~\eqref{eq:opt}.
\end{rem}

\begin{rem}\label{rem:eta1}
Note that when $\eta_{1}$ increases beyond $1$, the energy difference between pairs $(11)$ and $(01)$, and pairs $(11)$ and $(10)$ reduces, thereby increasing $\mathsf{P_{e}^{\star}}$, therefore, $0\leq\eta_{1}<1$.
\end{rem}

We rearrange the average power constraint in~\eqref{eq:constraint} as, $\epsilon_{2} = 2 - \alpha\left(\eta_{1} + \eta_{2} - 2\right)$ and substitute it in~\eqref{eq:pe1}. With, $\epsilon_{1}=0$ and $\epsilon_{2}$ expressed in terms of $\alpha, \eta_{1}$, and $\eta_{2}$, the expression of $\mathsf{P_{e}^{\star}}$ is now only a function of $\alpha$, $\eta_{1}$, and $\eta_{2}$ for a fixed $N_{o} $ and $N_{r}$. Thus, we have reduced the dimension of search space to 3. In general, we cannot comment on the nature of $\mathsf{P_{e}^{\star}}$ as a function of $\alpha$, $\eta_{1}$, and $\eta_{2}$. Therefore, we analyse $\mathsf{P_{e}^{\star}}$ in single dimension by fixing $\eta_{1}$, $\alpha$ and then observe the nature of $\mathsf{P_{e}^{\star}}$ w.r.t. $\eta_{2}\in \mathbb{R^{+}}$. Similarly, for a fixed $\eta_{1}$ and $\eta_{2}$, we observe the nature of $\mathsf{P_{e}^{\star}}$ w.r.t. $\alpha\in (0,1)$. Along these lines, we determine the nature of all the error events in~\tablename{~\ref{tab:Pe_tab}} as a function of $\alpha$ and $\eta_{2}$ by keeping $\eta_{1}$ fixed.

\begin{lemma}
For a fixed $\eta_{1}$ and $\alpha$, $\mathsf{P_{21}}$ is independent of $\eta_{2}$. Also, $\mathsf{P_{21}}$ is an increasing function of $\alpha$. However, $\mathsf{P_{23}}$ decreases when either $\eta_{2}$ increases and $\alpha$ is fixed or when $\alpha$ increases and $\eta_{2}$ is kept constant.
\end{lemma}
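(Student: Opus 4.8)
The plan is to dispatch the three assertions one at a time, each by reducing it to the monotonicity of one of two elementary scalar functions. \textbf{Setup:} From Table~\ref{tab:Pe_tab}, $\mathsf{P}_{21}=\gamma(N_{r},\rho_{1}/\upsilon_{10})/\Gamma(N_{r})$ and $\mathsf{P}_{23}=\Gamma(N_{r},\rho_{2}/\upsilon_{10})/\Gamma(N_{r})$, with $\rho_{1},\rho_{2}$ taken in the approximate forms \eqref{eq:rho1}--\eqref{eq:rho2} used throughout the paper. Under the running assumption $\epsilon_{1}=0$ we have $\upsilon_{00}=N_{o}$ constant, while $\upsilon_{10}=1-\alpha+\alpha\eta_{1}+N_{o}$ and $\upsilon_{11}=1-\alpha+\alpha\eta_{2}+N_{o}$; using $0\le\eta_{1}<1$ (so $\alpha(1-\eta_{1})<1$) and $\eta_{1}<\eta_{2}$ one checks $\upsilon_{00}<\upsilon_{10}<\upsilon_{11}$ for every $\alpha\in(0,1)$. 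First I would introduce the ratios $t:=\upsilon_{10}/\upsilon_{00}>1$ and $s:=\upsilon_{11}/\upsilon_{10}>1$; since the threshold expressions in \eqref{eq:rho1}--\eqref{eq:rho2} are homogeneous of degree zero in the variances, substituting collapses them to $\rho_{1}/\upsilon_{10}=N_{r}\,g(t)$ and $\rho_{2}/\upsilon_{10}=N_{r}\,h(s)$, where $g(x)=\frac{\ln x}{x-1}$ and $h(x)=\frac{x\ln x}{x-1}$ on $(1,\infty)$.

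\textbf{The $\mathsf{P}_{21}$ claims:} Independence of $\eta_{2}$ is immediate, because $\rho_{1}$ is built only from $\upsilon_{00}$ and $\upsilon_{10}$, and neither involves $\eta_{2}$, so the argument $N_{r}g(t)$ of $\gamma(\cdot)$ — hence $\mathsf{P}_{21}$ — depends on $(\alpha,\eta_{1})$ alone. For monotonicity in $\alpha$: since $x\mapsto\gamma(N_{r},x)$ is strictly increasing, it suffices to show $N_{r}g(t)$ increases in $\alpha$. Now $\partial\upsilon_{10}/\partial\alpha=\eta_{1}-1<0$ with $\upsilon_{00}$ fixed, so $t$ strictly decreases in $\alpha$; and $g$ is strictly decreasing on $(1,\infty)$ because the numerator of $g'(x)$, namely $1-\tfrac1x-\ln x$, vanishes at $x=1$ and has derivative $\tfrac{1-x}{x^{2}}<0$ for $x>1$. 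Composing yields $\mathsf{P}_{21}$ increasing in $\alpha$.

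\textbf{The $\mathsf{P}_{23}$ claim:} Since $x\mapsto\Gamma(N_{r},x)$ is strictly decreasing, I must show $N_{r}h(s)$ increases both in $\eta_{2}$ with $\alpha$ fixed, and in $\alpha$ with $\eta_{2}$ fixed. One checks $h$ is strictly increasing on $(1,\infty)$: the numerator of $h'(x)$ is $x-1-\ln x$, zero at $x=1$ with derivative $1-\tfrac1x>0$ for $x>1$. It then remains to see $s$ is increasing in each variable. With $\alpha$ fixed, $\upsilon_{11}$ grows in $\eta_{2}$ while $\upsilon_{10}$ is constant, so $s$ increases. With $\eta_{2}$ fixed, write $s(\alpha)=\frac{(1+N_{o})-\alpha(1-\eta_{2})}{(1+N_{o})-\alpha(1-\eta_{1})}$; then $s'(\alpha)=\frac{(1+N_{o})(\eta_{2}-\eta_{1})}{\bigl((1+N_{o})-\alpha(1-\eta_{1})\bigr)^{2}}>0$ since $\eta_{2}>\eta_{1}$. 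Composing with $h$ increasing gives the monotonicity of $\mathsf{P}_{23}$ in both variables.

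\textbf{Main obstacle:} Nothing here is deep; the only genuine work is (i) massaging the threshold expressions \eqref{eq:rho1}--\eqref{eq:rho2} into the scale-free forms $N_{r}g(t)$ and $N_{r}h(s)$, and (ii) the two elementary monotonicity checks for $g$ and $h$, for which one must carefully invoke the standing constraints $\epsilon_{1}=0$, $0\le\eta_{1}<1$, $\eta_{1}<\eta_{2}$, $\alpha\in(0,1)$ in order to guarantee $t,s>1$, so that the signs of $g'$ and $h'$ come out as needed.
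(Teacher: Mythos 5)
Your proof is correct and follows essentially the same route as the paper's: both reduce each claim to the monotonicity of the scale-free ratio $\rho/\upsilon_{10}$ in a single variable, the only difference being that you parametrize by the variance ratios $t=\upsilon_{10}/\upsilon_{00}$, $s=\upsilon_{11}/\upsilon_{10}$ and prove $g'<0$, $h'>0$ on $(1,\infty)$ explicitly, whereas the paper uses the relative differences $\kappa_{1}=t-1>0$ and $\kappa_{2}=s^{-1}-1\in(-1,0)$ and simply asserts that $\ln(1+\kappa)/\kappa$ is decreasing. Your version is marginally cleaner in that it keeps both cases on the uniform domain $(1,\infty)$ and supplies the derivative checks the paper leaves implicit, but the underlying argument is identical.
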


\begin{proof}
In the first part, we will prove that $\mathsf{P_{21}}$ is independent of $\eta_{2}$. The term $\mathsf{P_{21}}$ is given by, 

\bieee
\mathsf{P_{21}} &=& \dfrac{\gamma\left(N_{r}, \frac{\rho_{1}}{\upsilon_{10}}\right)}{\Gamma\left(N_{r}\right)}.\nn 
\eieee

\bieee
\noalign{\noindent\text{From~\eqref{eq:rho1}}, we have,}\dfrac{\rho_{1}}{\upsilon_{10}} &=& N_{r}\dfrac{\upsilon_{00}}{\upsilon_{00}-\upsilon_{10}}\ln\left(\dfrac{\upsilon_{00}}{\upsilon_{10}}\right),\nn\\
&=& N_{r}\dfrac{\upsilon_{00}}{\upsilon_{10}-\upsilon_{00}}\ln\left(\dfrac{\upsilon_{10}}{\upsilon_{00}}\right),\nn\\
&=& N_{r}\dfrac{\ln\left(1+\kappa_{1}\right)}{\kappa_{1}}\label{eq:r1_v10},
\eieee

\noindent where, $\kappa_{1} = \frac{\upsilon_{10}-\upsilon_{00}}{\upsilon_{00}} = \frac{1-\alpha+\alpha\eta_{1}}{N_{o}}$. Since $\upsilon_{10}>\upsilon_{00}$, we have $\kappa_{1}>1$. We observe that, $\kappa_{1}$ is independent of $\eta_{2}$, therefore, $\mathsf{P_{21}}$ is independent of $\eta_{2}$.

Furthermore, differentiating $\kappa_{1}$ w.r.t $\alpha$ gives us $(-1+\eta_{1})N_{o}^{-1}$. The term $(-1+\eta_{1})N_{o}^{-1}<0$, since $\eta_{1}<1$ (Remark~\ref{rem:eta1}); thus, $\kappa_{1}$ is a decreasing function of $\alpha$. In addition, $\frac{\ln\left(1+\kappa_{1}\right)}{\kappa_{1}}$ is a decreasing function of $\kappa_{1}>0$. Therefore, $N_{r}\frac{\ln\left(1+\kappa_{1}\right)}{\kappa_{1}}$ and hence the ratio $\frac{\rho_{1}}{\upsilon_{10}}$ in~\eqref{eq:r1_v10} is an increasing function of $\alpha$. Furthermore, since $\gamma\left(N_{r}, \frac{\rho_{1}}{\upsilon_{10}}\right)$ is an increasing function of $\frac{\rho_{1}}{\upsilon_{10}}$, as $\alpha$ increases, $\gamma\left(N_{r}, \frac{\rho_{1}}{\upsilon_{10}}\right)$ increases. Thus,  $\mathsf{P_{21}}$ is an increasing function of $\alpha$. 

\noindent On similar lines, $\mathsf{P_{23}}$ is given by,

\bieee
\mathsf{P_{23}} &=& \dfrac{\Gamma\left(N_{r}, \frac{\rho_{2}}{\upsilon_{10}}\right)}{\Gamma\left(N_{r}\right)}\nn
\eieee

\noindent where, 

\bieee
\dfrac{\rho_{2}}{\upsilon_{10}} &=& N_{r}\frac{\upsilon_{11}}{\upsilon_{10}-\upsilon_{11}}\ln\left(\frac{\upsilon_{10}}{\upsilon_{11}}\right),\nn\\
&=& N_{r}\dfrac{\ln\left(1+\kappa_{2}\right)}{\kappa_{2}}\label{eq:r2_v10},
\eieee

\noindent such that, $\kappa_{2} = \frac{\upsilon_{10}-\upsilon_{11}}{\upsilon_{11}} = \frac{\alpha\left(\eta_{1}-\eta_{2}\right)}{1-\alpha(1-\eta_{2})}$. Since, $\upsilon_{10}<\upsilon_{11}$, $\kappa_{2}\in(-1,0)$. Taking derivative of $\kappa_{2}$ w.r.t. $\eta_{2}$ gives $\frac{-\alpha\upsilon_{10}}{\upsilon_{11}^{2}}$. This shows that $\kappa_{2}$ is a decreasing function of $\eta_{2}$. Now, as $\eta_{2}$ increases, $\kappa_{2}$ decreases and therefore, the ratio in~\eqref{eq:r2_v10}, $\frac{\rho_{2}}{\upsilon_{10}} = N_{r}\frac{\ln\left(1+\kappa_{2}\right)}{\kappa_{2}}$ increases. Since, $\Gamma\left(N_{r}, \frac{\rho_{2}}{\upsilon_{10}}\right)$ is a decreasing function of $\frac{\rho_{2}}{\upsilon_{10}}$, $\Gamma\left(N_{r}, \frac{\rho_{2}}{\upsilon_{10}}\right)$ decreases as $\frac{\rho_{2}}{\upsilon_{10}}$ increases. Thus, $\mathsf{P_{23}}$ is a decreasing function of $\eta_{2}$. 

Similarly, differentiating $\kappa_{2}$ w.r.t. $\alpha$ gives $\frac{\eta_{1}-\eta_{2}}{\left(1-\alpha + \alpha\eta_{2}\right)^{2}}<0$, implying $\kappa_{2}$ is a decreasing function of $\alpha$. Now, as $\alpha$ increases, $\kappa_{2}$ decreases and therefore, the ratio in~\eqref{eq:r2_v10}, $\frac{\rho_{2}}{\upsilon_{10}} = N_{r}\frac{\ln\left(1+\kappa_{2}\right)}{\kappa_{2}}$ increases. Since, $\Gamma\left(N_{r}, \frac{\rho_{2}}{\upsilon_{10}}\right)$ is a decreasing function of $\frac{\rho_{2}}{\upsilon_{10}}$, $\Gamma\left(N_{r}, \frac{\rho_{2}}{\upsilon_{10}}\right)$ decreases as $\frac{\rho_{2}}{\upsilon_{10}}$ increases. Thus, $\mathsf{P_{23}}$ is a decreasing function of $\alpha$. 
\end{proof}

Along similar lines, we can prove the behaviour of each term in~\tablename{~\ref{tab:Pe_tab}}, as either increasing, decreasing or independent functions of $\alpha$ and $\eta_{2}$ when the other is fixed. The same is summarized in~\tablename{~\ref{tab:Tab}}.

\begin{table}[h]
\begin{center}
\begin{tabular}{ c|c|c|c|c|c|c| }
\cline{2-7}
 & $\mathsf{P_{1}}$ & $\mathsf{P_{21}}$ & $\mathsf{P_{23}}$ & $\mathsf{P_{32}}$ & $\mathsf{P_{34}}$ & $\mathsf{P_{4}}$ \\
\hline 
\multicolumn{1}{|l|}{$\eta_{2}$} & \multirow{1}{*}{$\bigstar$} & \multirow{1}{*}{$\bigstar$} & $-$ & $-$ & $+$ & $+$\\  
\hline
\multicolumn{1}{|l|}{$\alpha$} & $+$ & $+$ & $-$ & $-$ & $-$ & $-$ \\
\hline
\end{tabular}
\end{center}
\caption{\label{tab:Tab}Table depicting the behaviour of error events. For each error event, the symbol ``+'', ``-'', and ``$\bigstar$'' represent, increasing, decreasing, and independent behaviours, as a function of $\left\{\alpha,\eta_{1},\eta_{2}\right\}$, such that the entry in the first column is variable while the rest two are fixed.}
\end{table}

\begin{figure}[h]
\begin{center}
\includegraphics[width = 0.5\textwidth, height = 0.4\textheight]{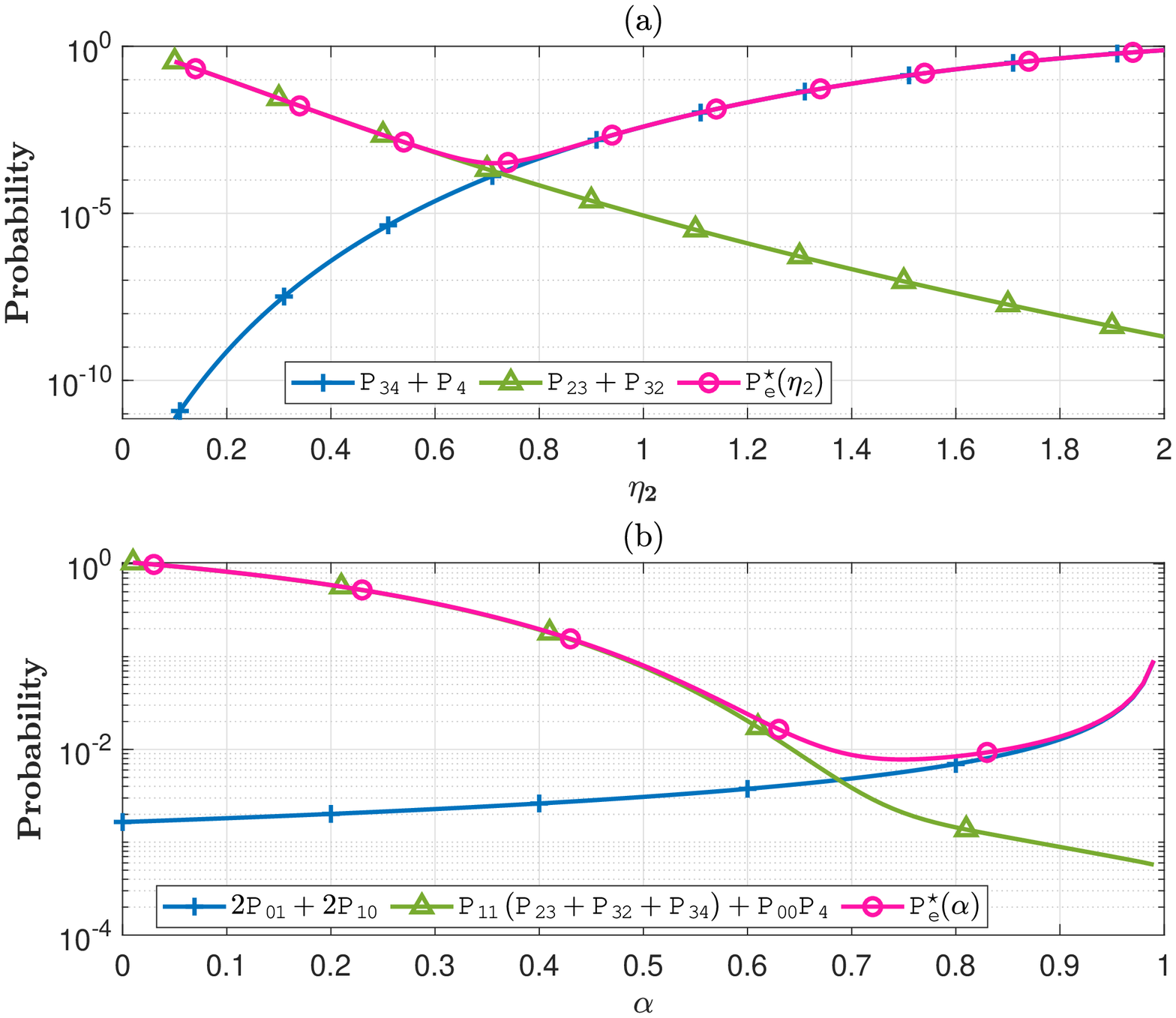}
\vspace{-0.8cm}
\caption{\label{fig:unimod}For SNR = 35dB, $N_{r} = 32$, (a) The dip experienced by $\mathsf{P_{e}^{\star}}$ as a function of $\eta_{2}$ and the intersection of $\mathsf{P_{34}} + \mathsf{P_{4}}$ and $\mathsf{P_{23}} + \mathsf{P_{32}}$ are approximately close for fixed $\eta_{1}$ and $\alpha$. (b) The dip experienced by $\mathsf{P_{e}^{\star}}$ as a function of $\alpha$ and the intersection of $\mathsf{P_{11}}\left(\mathsf{P_{34}} + \mathsf{P_{32}} + \mathsf{P_{23}}\right) + \mathsf{P_{00}}\mathsf{P_{4}}$ and $2\mathsf{P_{01}} + 2\mathsf{P_{10}}$ are approximately close for fixed $\eta_{1}$ and $\alpha$.} 
\end{center}
\end{figure}

From~\tablename{~\ref{tab:Tab}}, the components of $\mathsf{P_{e}^{*}}$ are either monotonically decreasing or monotonically increasing as a function of $\eta_{2}$ or $\alpha$. In~\figurename{\ref{fig:unimod}}~(a), we plot $\mathsf{P_{e}^{*}}$, and the increasing and decreasing components of $\mathsf{P_{e}^{*}}$ as a function of $\eta_{2}$. Similarly, in~\figurename{\ref{fig:unimod}}~(b), we plot $\mathsf{P_{e}^{*}}$, and the increasing and decreasing components of $\mathsf{P_{e}^{*}}$ as a function of $\alpha$. We observe that the increasing and decreasing components of $\mathsf{P_{e}^{*}}$ in both cases intersect only once. Moreover, the intersection of the components in both cases is very close to the value of $\eta_{2}$ ($\alpha$) at which $\mathsf{P_{e}^{*}}$ experiences minima. Thus, in the next two theorems, for a constant $\eta_{1}$, we prove that the increasing and decreasing components of $\mathsf{P_{e}^{\star}}$ intersect only once when either $\eta_{2}$ or $\alpha$ is varied.

\begin{theorem}\label{th:uni_eta2}
For a given $\eta_{1}$ and $\alpha$, the decreasing and increasing components of $\mathsf{P_{e}^{\star}}$ in~\eqref{eq:pe1}, i.e., $\mathsf{P_{23}} + \mathsf{P_{32}}$ and  $\mathsf{P_{34}} + \mathsf{P_{4}}$, respectively, intersect only once for $\eta_{2}\in\left(\eta_{1},0.5\left(3+\alpha^{-1}-\eta_{1}\right)\right)$.
\end{theorem}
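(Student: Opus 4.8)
\noindent\emph{Proof plan.}\ \ Fix $\eta_1$ and $\alpha$, substitute $\epsilon_1=0$ and $\epsilon_2=2-\alpha(\eta_1+\eta_2-2)$, and set $G(\eta_2)\triangleq\big(\mathsf{P_{23}}+\mathsf{P_{32}}\big)-\big(\mathsf{P_{34}}+\mathsf{P_4}\big)$ on the open interval $J\triangleq\big(\eta_1,\tfrac12(3+\alpha^{-1}-\eta_1)\big)$. The plan is to show $G$ has exactly one zero on $J$. Uniqueness is free: by Table~\ref{tab:Tab}, for fixed $\eta_1,\alpha$ the terms $\mathsf{P_{23}},\mathsf{P_{32}}$ are strictly decreasing in $\eta_2$ and $\mathsf{P_{34}},\mathsf{P_4}$ are strictly increasing, so $G$ is strictly decreasing and vanishes at most once. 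It remains to produce a sign change of $G$ across $J$.

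First I would observe that the endpoints of $J$ are exactly the parameter values at which consecutive dominant energies collide: using $\upsilon_{10}=1-\alpha+\alpha\eta_1+N_o$, $\upsilon_{11}=1-\alpha+\alpha\eta_2+N_o$ and $\upsilon_{01}=\epsilon_2+N_o$, one checks $\upsilon_{10}=\upsilon_{11}\iff\eta_2=\eta_1$ and $\upsilon_{11}=\upsilon_{01}\iff\eta_2=\tfrac12(3+\alpha^{-1}-\eta_1)$, with the strict order $\upsilon_{10}<\upsilon_{11}<\upsilon_{01}$ throughout the interior of $J$. Next I would record the elementary inequalities $\tfrac{\ln t}{t-1}<1<\tfrac{t\ln t}{t-1}$ for $t>1$ (equivalently $\ln t<t-1$ and $\ln t>1-t^{-1}$). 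Rewriting $\rho_2$ from~\eqref{eq:rho2} as $N_r\upsilon_{10}\tfrac{s\ln s}{s-1}$ with $s=\upsilon_{11}/\upsilon_{10}>1$, and $\rho_3$ from~\eqref{eq:rho3} as $N_r\upsilon_{11}\tfrac{t\ln t}{t-1}$ with $t=\upsilon_{01}/\upsilon_{11}>1$, these inequalities locate $N_r$ strictly between the two incomplete-Gamma arguments at each threshold: $\tfrac{\rho_2}{\upsilon_{10}}>N_r>\tfrac{\rho_2}{\upsilon_{11}}$ and $\tfrac{\rho_3}{\upsilon_{11}}>N_r>\tfrac{\rho_3}{\upsilon_{01}}$ on the interior of $J$.

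The crux is the pair of pointwise bounds $\mathsf{P_{23}}+\mathsf{P_{32}}<1$ and $\mathsf{P_{34}}+\mathsf{P_4}<1$ on the interior of $J$. For the first, $\mathsf{P_{23}}+\mathsf{P_{32}}=\tfrac{1}{\Gamma(N_r)}\big[\Gamma(N_r,\tfrac{\rho_2}{\upsilon_{10}})+\gamma(N_r,\tfrac{\rho_2}{\upsilon_{11}})\big]$; since $\tfrac{\rho_2}{\upsilon_{10}}>\tfrac{\rho_2}{\upsilon_{11}}$ and $\Gamma(N_r,\cdot)$ is strictly decreasing, $\Gamma(N_r,\tfrac{\rho_2}{\upsilon_{10}})<\Gamma(N_r,\tfrac{\rho_2}{\upsilon_{11}})$, so the sum is strictly below $\tfrac{1}{\Gamma(N_r)}\big[\Gamma(N_r,\tfrac{\rho_2}{\upsilon_{11}})+\gamma(N_r,\tfrac{\rho_2}{\upsilon_{11}})\big]=1$. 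The same argument, taking the form of $\mathsf{P_4}$ for which Table~\ref{tab:Tab} records increasing behaviour (i.e. $\mathsf{P_4}=\gamma(N_r,\tfrac{\rho_3}{\upsilon_{01}})/\Gamma(N_r)$, the probability that the highest-energy pair $(01)$ is decoded as $(11)$), gives $\mathsf{P_{34}}+\mathsf{P_4}<1$. With these in hand the sign change follows from the endpoint collisions. As $\eta_2\downarrow\eta_1$, $\upsilon_{10}\to\upsilon_{11}$ drives $\tfrac{\rho_2}{\upsilon_{10}},\tfrac{\rho_2}{\upsilon_{11}}\to N_r$ (the $0/0$ form in~\eqref{eq:rho2} has limit $N_r\upsilon_{11}$), so $\mathsf{P_{23}}+\mathsf{P_{32}}\to1$ while $\mathsf{P_{34}}+\mathsf{P_4}<1$, hence $G>0$ just inside the left endpoint. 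As $\eta_2\uparrow\tfrac12(3+\alpha^{-1}-\eta_1)$, $\upsilon_{11}\to\upsilon_{01}$ drives $\mathsf{P_{34}}+\mathsf{P_4}\to1$ while $\mathsf{P_{23}}+\mathsf{P_{32}}$ stays bounded away from $1$ (there $\upsilon_{10}<\upsilon_{11}$ strictly), hence $G<0$ just inside the right endpoint. The intermediate value theorem supplies a zero, and strict monotonicity of $G$ makes it unique, proving the theorem.

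I expect the main obstacle to be the endpoint bookkeeping --- justifying the limiting values of $\rho_2,\rho_3$ through the $0/0$ forms in~\eqref{eq:rho2}--\eqref{eq:rho3}, and verifying that $\mathsf{P_{23}}+\mathsf{P_{32}}$ stays uniformly below $1$, and thus below the limit of $\mathsf{P_{34}}+\mathsf{P_4}$, on a one-sided neighbourhood of the right endpoint of $J$. The algebraic endpoint identifications and the elementary logarithm inequalities are routine, and the $\eta_2$-monotonicities of the individual $\mathsf{P}_{\bullet}$ are inherited from Table~\ref{tab:Tab}.
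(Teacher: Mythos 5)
Your proof follows essentially the same route as the paper's: evaluate $\mathsf{P_{23}}+\mathsf{P_{32}}$ and $\mathsf{P_{34}}+\mathsf{P_{4}}$ at the two extremes of the $\eta_{2}$-interval (where consecutive variances $\upsilon_{10},\upsilon_{11},\upsilon_{01}$ collide), observe that their order reverses, and combine this with the monotonicities from Table~\ref{tab:Tab} to get a unique intersection. Your write-up is in fact more complete than the paper's at one point: the bracketing $\rho_{2}/\upsilon_{10}>N_{r}>\rho_{2}/\upsilon_{11}$ (and likewise for $\rho_{3}$) together with the monotonicity of $\Gamma(N_{r},\cdot)$ is precisely the justification the paper leaves implicit for its assertions that each sum is strictly below $1$ away from its own degenerate endpoint.
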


\begin{proof}
When $\alpha$ is fixed, the terms $\mathsf{P_{1}}$, $\mathsf{P_{21}}$, $\mathsf{P_{00}}$, $\mathsf{P_{11}}$, $\mathsf{P_{01}}$ and $\mathsf{P_{10}}$ in~\eqref{eq:pe1} are independent of $\eta_{2}$. From Table~\ref{tab:Tab}, $\mathsf{P_{23}}$ and $\mathsf{P_{32}}$ are decreasing functions of $\eta_{2}$, and $\mathsf{P_{34}}$ and $\mathsf{P_{4}}$ are increasing functions of $\eta_{2}$.

We have $\eta_{2}\in\left(\eta_{1},0.5\left(3+\alpha^{-1}-\eta_{1}\right)\right)$ since, $\upsilon_{10}<\upsilon_{11}<\upsilon_{01}$. For  $\mathsf{P_{23}}+\mathsf{P_{32}}$ and  $\mathsf{P_{4}}+\mathsf{P_{34}}$ to have one intersection, we can straightaway prove that $\mathsf{P_{23}} + \mathsf{P_{32}}$ and $\mathsf{P_{34}} + \mathsf{P_{4}}$ have their order reversed at the extreme values of $\eta_{2}$. We first evaluate $\mathsf{P_{23}} + \mathsf{P_{32}}$ at extreme lower side of $\eta_{2}$ to get,

\begin{small}
\bieee
\lim_{\eta_{2}\!\rightarrow\eta_{1}}\!\!\mathsf{P_{23}}\! +\! \mathsf{P_{32}}\! &=&\!\! \dfrac{\Gamma\left(N_{r},N_{r}\right)}{N_{r}}\!\! +\!\! \dfrac{\gamma\left(N_{r},N_{r}\right)}{N_{r}}\!=\! 1.\nn
\eieee

\noindent However, the value of $\mathsf{P_{23}}+\mathsf{P_{32}}$ at the right most extreme of $\eta_{2}$ is given by,

\bieee
\lim_{\eta_{2}\rightarrow 0.5\left(3\!+\alpha^{-1}\!-\!\eta_{1}\right)} \mathsf{P_{23}} + \mathsf{P_{32}}\!\! &=&\!\! \dfrac{\Gamma\left(N_{r},\frac{\rho_{2}}{\upsilon_{10}}\right)}{N_{r}}\!\! +\!\! \dfrac{\gamma\left(N_{r},\frac{\rho_{2}}{\upsilon_{11}}\right)}{N_{r}}\!\!<1.\nn
\eieee
\end{small}

\noindent Similarly,

\begin{small}
\bieee
\lim_{\eta_{2}\rightarrow\eta_{1}}\!\!\mathsf{P_{34}}\! +\! \mathsf{P_{4}}\! &=&\!\! \dfrac{\Gamma\left(N_{r},\frac{\rho_{3}}{\upsilon_{11}}\right)}{N_{r}}\!\! +\!\! \dfrac{\gamma\left(N_{r},\frac{\rho_{3}}{\upsilon_{01}}\right)}{N_{r}}\!\!<1,\nn\\
\lim_{\eta_{2}\rightarrow 0.5\left(3\!+\!\alpha^{-1}-\!\eta_{1}\right)}\!\!\mathsf{P_{34}}\! +\! \mathsf{P_{4}}\! &=&\! \dfrac{\Gamma\left(N_{r},N_{r}\right)}{N_{r}}\! +\! \dfrac{\gamma\left(N_{r},N_{r}\right)}{N_{r}}\! =\! 1.\nn
\eieee
\end{small}

\noindent Altogether, we have,

\begin{small}
\begin{equation*}
\begin{cases}
 \mathsf{P_{23}} + \mathsf{P_{32}} > \mathsf{P_{34}} + \mathsf{P_{4}},  & \eta_{2}\rightarrow\eta_{1},
\\
\mathsf{P_{23}} + \mathsf{P_{32}} < \mathsf{P_{34}} + \mathsf{P_{4}}, & \eta_{2}\rightarrow 0.5\left(3+\alpha^{-1}-\eta_{1}\right).
\end{cases}
\end{equation*}
\end{small}

\noindent Therefore, the increasing and decreasing components in $\mathsf{P_{e}^{\star}}$ intersect only once as a function of $\eta_{2}$ and this intersection can be computed using the Newton-Raphson (NR) \cite{NR} algorithm. This completes the proof.
\end{proof}

\begin{theorem}\label{th:uni_alpha}
For a given $\eta_{1}$ and $\eta_{2}$, the decreasing and increasing components of $\mathsf{P_{e}^{\star}}$ in~\eqref{eq:pe1}, i.e., $\mathsf{P}_{00}\left(\mathsf{P_{1}}+\mathsf{P_{4}}\right) + \mathsf{P_{11}}\left(\mathsf{P_{21}} + \mathsf{P_{23}}+\mathsf{P_{32}} + \mathsf{P_{34}}\right)$ and $2\mathsf{P_{01}} + 2\mathsf{P_{10}}$, respectively, intersect only once for $\alpha\in(0,1)$.
\end{theorem}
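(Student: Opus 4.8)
The plan is to mirror the proof of Theorem~\ref{th:uni_eta2}. Fix $\eta_{1}$ and $\eta_{2}$; using $\epsilon_{1}=0$ (the remark preceding the theorem) and $\epsilon_{2}=2-\alpha(\eta_{1}+\eta_{2}-2)$ from the power constraint, everything in~\eqref{eq:pe1} becomes a function of $\alpha$ alone for fixed $N_{o},N_{r}$. Write $D(\alpha)=\mathsf{P}_{00}(\mathsf{P_{1}}+\mathsf{P_{4}})+\mathsf{P}_{11}(\mathsf{P_{21}}+\mathsf{P_{23}}+\mathsf{P_{32}}+\mathsf{P_{34}})$ and $I(\alpha)=2\mathsf{P_{01}}+2\mathsf{P_{10}}$ for the two aggregated components. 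I would show that $D(\alpha)-I(\alpha)$ is continuous on $(0,1)$, strictly monotone, and changes sign between the two endpoints; this yields a unique root, which is then located by Newton--Raphson~\cite{NR} exactly as in Theorem~\ref{th:uni_eta2}.

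First I would settle the monotonicity. The increasing component is immediate: by Lemma~\ref{lm:p11}, $\mathsf{P_{01}}$ and $\mathsf{P_{10}}$ are strictly increasing in $\alpha$, hence $I(\alpha)$ is strictly increasing. For $D(\alpha)$, Lemma~\ref{lm:p11} gives that $\mathsf{P_{00}}$ and $\mathsf{P_{11}}$ are decreasing, and Table~\ref{tab:Tab} gives that $\mathsf{P_{4}},\mathsf{P_{23}},\mathsf{P_{32}},\mathsf{P_{34}}$ are decreasing in $\alpha$; since a product of two nonnegative decreasing functions is decreasing, the sub-block $\mathsf{P_{00}}\mathsf{P_{4}}+\mathsf{P_{11}}(\mathsf{P_{23}}+\mathsf{P_{32}}+\mathsf{P_{34}})$ is decreasing. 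The delicate terms are $\mathsf{P_{00}}\mathsf{P_{1}}$ and $\mathsf{P_{11}}\mathsf{P_{21}}$, because Table~\ref{tab:Tab} records $\mathsf{P_{1}}$ and $\mathsf{P_{21}}$ as increasing in $\alpha$. Here I would use the $\kappa$-substitution of the preceding lemma (e.g.\ $\rho_{1}/\upsilon_{00}=N_{r}(1+\kappa_{1})\kappa_{1}^{-1}\ln(1+\kappa_{1})$ with $\kappa_{1}=(1-\alpha+\alpha\eta_{1})/N_{o}$ decreasing in $\alpha$ by Remark~\ref{rem:eta1}) to bound the growth rate of $\mathsf{P_{1}},\mathsf{P_{21}}$, together with Remark~\ref{rem:p00p11} (so that $\mathsf{P_{00}}\approx\mathsf{P_{11}}\approx 1$ for $\alpha$ away from $1$ and these products are essentially $\mathsf{P_{1}},\mathsf{P_{21}}$), and argue that their increase is dominated by the decrease of the $\mathsf{P_{23}}+\mathsf{P_{32}}+\mathsf{P_{34}}$ block; equivalently, since the statement only concerns the single crossing of two sums, one may transfer $\mathsf{P_{00}}\mathsf{P_{1}}+\mathsf{P_{11}}\mathsf{P_{21}}$ into the increasing aggregate, which makes $D-I$ manifestly monotone.

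Next I would obtain the sign change. As $\alpha\to 1^{-}$, $N_{C0}=N_{o}+\tfrac{\Lambda}{2}(1+\alpha)\to N_{o}+\Lambda$ and $N_{C1}=\sigma_{AC}^{2}(1-\alpha)+\tfrac{\Lambda}{2}(1+\alpha)+N_{o}\to N_{o}+\Lambda$, so $N_{C0}$ and $N_{C1}$ merge; the limit of $\tau=\tfrac{N_{C0}N_{C1}}{N_{C0}-N_{C1}}\ln(N_{C0}/N_{C1})$ is $N_{o}+\Lambda$, hence $\mathsf{P_{01}}=e^{-\tau/N_{C0}}\to e^{-1}$, $\mathsf{P_{10}}=1-e^{-\tau/N_{C1}}\to 1-e^{-1}$, and $I(\alpha)\to 2$. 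At the same time $\mathsf{P_{00}}\to 1-e^{-1}$, $\mathsf{P_{11}}\to e^{-1}$, and, using $\eta_{1}<\eta_{2}$ (so $\upsilon_{10}<\upsilon_{11}$ and $\rho_{2}/\upsilon_{10}>\rho_{2}/\upsilon_{11}$), the identity $\Gamma(N_{r},x)+\gamma(N_{r},x)=\Gamma(N_{r})$ gives $\mathsf{P_{23}}+\mathsf{P_{32}}<1$; bounding the remaining probabilities by $1$ then yields $D(1^{-})<2=I(1^{-})$. As $\alpha\to 0^{+}$, with $\epsilon_{1}=0$ and $\epsilon_{2}\to 2$, the energies $\upsilon_{10}=1-\alpha+\alpha\eta_{1}+N_{o}$ and $\upsilon_{11}=1-\alpha+\alpha\eta_{2}+N_{o}$ both tend to $1+N_{o}$, i.e.\ the two central points of Charlie's constellation collide; then $\rho_{2}\to N_{r}\upsilon_{10}$, so $\mathsf{P_{23}}+\mathsf{P_{32}}\to 1$ and $D(0^{+})\ge\mathsf{P_{11}}$, whereas $\mathsf{P_{01}},\mathsf{P_{10}}$ attain their minimum over $(0,1)$ by Lemma~\ref{lm:p11}; in the intended regime ($\sigma_{AC}^{2}>\sigma_{CB}^{2}$, moderate-to-high SNR) this makes $\mathsf{P_{11}}\approx 1$, so $I(0^{+})=2\mathsf{P_{01}}+2(1-\mathsf{P_{11}})$ is small and $D(0^{+})>I(0^{+})$.

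Combining, $D(\alpha)-I(\alpha)$ is continuous, positive near $\alpha=0$, negative near $\alpha=1$, and strictly decreasing (a decreasing function minus an increasing one), so it has exactly one zero in $(0,1)$; thus the two components meet exactly once, and the crossing is found by Newton--Raphson. The hard part will be the monotonicity of $D(\alpha)$: since it contains $\mathsf{P_{00}}\mathsf{P_{1}}$ and $\mathsf{P_{11}}\mathsf{P_{21}}$ with the $\alpha$-increasing factors $\mathsf{P_{1}},\mathsf{P_{21}}$, a term-by-term argument fails and one must rely on the near-constancy of $\mathsf{P_{00}},\mathsf{P_{11}}$ together with explicit incomplete-Gamma bounds (or on the regrouping above); the endpoint inequality $D(0^{+})>I(0^{+})$ is a secondary point that holds in the operating regime assumed throughout.
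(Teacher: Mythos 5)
Your proposal follows essentially the same route as the paper: the same grouping into a decreasing aggregate versus the increasing part $2\mathsf{P_{01}}+2\mathsf{P_{10}}$, an order reversal at the endpoints $\alpha\to 0^{+}$ and $\alpha\to 1^{-}$, and Newton--Raphson to locate the crossing; the paper deals with the troublesome terms $\mathsf{P_{00}}\mathsf{P_{1}}+\mathsf{P_{11}}\mathsf{P_{21}}$ not by proving they are monotone but by observing that they vanish at both endpoints (since $\mathsf{P_{1}},\mathsf{P_{21}}\to 0$ as $\alpha\to 0$ and $\mathsf{P_{00}},\mathsf{P_{11}}\to 0$ as $\alpha\to 1$), so they do not disturb the endpoint comparison. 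Your endpoint limits are actually more careful than the paper's (which writes $\mathsf{P_{01}}=\mathsf{P_{10}}\approx 1$ as $\alpha\to 1$, whereas the correct limits are $e^{-1}$ and $1-e^{-1}$, both giving $I\to 2$), and the monotonicity gap you flag for the decreasing component is one the paper shares, since it too only establishes the sign change at the endpoints before asserting a single intersection.
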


\begin{proof}
Rearranging~\eqref{eq:pe1}, we get 

\begin{multline*}
\mathsf{P_{e}^{\star}} = \dfrac{1}{4}\left[\mathsf{P_{00}}\mathsf{P_{1}} + \mathsf{P_{11}}\mathsf{P_{21}} + \mathsf{P_{11}}\left(\mathsf{P_{23}} + \mathsf{P_{32}} + \mathsf{P_{34}}\right) +\right.\\
\ \left.\mathsf{P_{00}}\mathsf{P_{4}} + 2\mathsf{P_{01}} + 2\mathsf{P_{10}}\right].
\end{multline*}

\noindent We make the following observations
\begin{itemize}
\item Since each term in $\mathsf{P_{11}}\left(\mathsf{P_{23}} + \mathsf{P_{32}} + \mathsf{P_{34}}\right) + \mathsf{P_{00}}\mathsf{P_{4}}$ is a decreasing function of $\alpha$, the overall sum is a decreasing function of $\alpha$. Further, as $\alpha\rightarrow 1$, $\mathsf{P_{11}}\left(\mathsf{P_{23}} + \mathsf{P_{32}} + \mathsf{P_{34}}\right) + \mathsf{P_{00}}\mathsf{P_{4}}\approx 0$, since, $\mathsf{P_{00}} = \mathsf{P_{11}}\approx 0$ as $\alpha\rightarrow 1$. 
\item For $\epsilon_{1} = 0$, when $\alpha\rightarrow 0$, $\mathsf{P_{00}}\mathsf{P_{1}} + \mathsf{P_{11}}\mathsf{P_{21}} \approx 0$, since $\mathsf{P_{1}} = \mathsf{P_{21}} \approx 0$ when, $\alpha\rightarrow 0$. Similarly, when $\alpha\rightarrow 1$,  $\mathsf{P_{00}}\mathsf{P_{1}} + \mathsf{P_{11}}\mathsf{P_{21}} \approx 0$, since $\mathsf{P_{00}} = \mathsf{P_{11}}\approx 0$, when $\alpha\rightarrow 1$.
\item From Lemma~\ref{lm:p11}, $2\mathsf{P_{01}} + 2\mathsf{P_{10}}$ monotonically increases with $\alpha$, and as $\alpha\rightarrow 1$, $\mathsf{P_{01}}=\mathsf{P_{10}}\approx 1$.
\end{itemize}

\noindent Thus, it is elementary to show that,

\bieee
\mathsf{P_{11}}\left(\mathsf{P_{23}} + \mathsf{P_{32}} + \mathsf{P_{34}}\right) + \mathsf{P_{00}}\mathsf{P_{4}} &>& 2\mathsf{P_{01}} + 2\mathsf{P_{10}}, \alpha\rightarrow 0,\nn\\
\mathsf{P_{11}}\left(\mathsf{P_{23}} + \mathsf{P_{32}} + \mathsf{P_{34}}\right) + \mathsf{P_{00}}\mathsf{P_{4}} &<& 2\mathsf{P_{01}} + 2\mathsf{P_{10}}, \alpha\rightarrow 1.\nn
\eieee 

\noindent Further, the intersection can be computed using NR \cite{NR} algorithm. This completes the proof.
\end{proof}

In the next section, we use Theorem~\ref{th:uni_eta2} and Theorem~\ref{th:uni_alpha} to present a low-complexity algorithm. Using this algorithm, we obtain the local minima of $\eta_{2}$ and $\alpha$ for a given $\eta_{1}$.

\subsection{Algorithm}

\begin{algorithm}
  
  \KwInput{$\mathsf{P_{e}}$ from~\eqref{eq:pe1}, $\delta_{\mathsf{P_{e}^{\star}}}>0$, $\delta_{\eta_{1}}>0$}
  \KwOutput{$\left\{\eta_{1}^{\dagger}, \eta_{2}^{\dagger},\alpha^{\dagger}\right\}$}
  Initialize: $\eta_{1}\gets 0$, $\eta_{2}\gets \eta_{2}^{o}$, $\alpha\gets \alpha^{o}$\\
  \While{true} 
  {
   $\mathsf{P_{e}^{o}} \gets \mathsf{P_{e}^{\star}}\left(\alpha,\eta_{1},\eta_{2}\right)$\\
  \While{true}
  {
  Find $\eta_{2}^{i}$ using Theorem~\ref{th:uni_eta2} and update $\mathsf{P_{e}^{\eta_{2}}} \gets \mathsf{P_{e}^{\star}}\left(\eta_{1}, \eta_{2}^{i},\alpha\right)$\\
 Find $\alpha^{i}$ using Theorem~\ref{th:uni_alpha} and update $\mathsf{P_{e}^{\alpha}} \gets \mathsf{P_{e}^{\star}}\left(\eta_{1}, \eta_{2},\alpha^{i}\right)$\\
  \If{$\mathsf{P_{e}^{\alpha}}-\mathsf{P_{e}^{\eta_{2}}} \geq \delta_{\mathsf{P_{e}}}$}
  {
  $\eta_{2} \gets \eta_{2}^{i}$, $\alpha \gets \alpha^{o}$\\
  continue
  }
  \ElseIf{$\mathsf{P_{e}^{\alpha}}-\mathsf{P_{e}^{\eta_{2}}} \leq -\delta_{\mathsf{P_{e}}}$}
  {
  $\eta_{2} \gets \eta_{2}^{o}$, $\alpha \gets \alpha^{i}$\\
  continue
  }
  \ElseIf {$\left\vert\mathsf{P_{e}^{\alpha}}-\mathsf{P_{e}^{\eta_{2}}}\right\vert<\delta_{\mathsf{P_{e}}}$}
  {
  $\mathsf{P_{e}^{\iota}}= \min\left(\mathsf{P_{e}^{\alpha}}, \mathsf{P_{e}^{\eta_{2}}}\right)$\\
  break
  }
  }
  \If{$\left\vert\mathsf{P_{e}^{\iota}}-\mathsf{P_{e}^{o}}\right\vert > \delta_{\mathsf{P_{e}}}$}
  {
  $\eta_{1} \gets \eta_{1} + \delta_{\eta_{1}}$, $\mathsf{P_{e}^{o}}\gets \mathsf{P_{e}^{\iota}}$\\
  continue
  }
  \Else
  {
  $\eta_{1}^{\dagger} \gets \eta_{1}$, $\eta_{2}^{\dagger} \gets \eta_{2}^{i}$, $\alpha^{\dagger} \gets \alpha^{i}$\\
  break
  }
  }
  \caption{\label{Algo} Two-Layer Greedy Descent  Algorithm}
\end{algorithm}

In the previous section, we proved that for a fixed $\eta_{1}$ and $\alpha$, the value of $\eta_{2}$ at which the increasing and decreasing components in $\mathsf{P_{e}^{\star}}$  intersect is very close to the value of $\eta_{2}$ at which $\mathsf{P_{e}^{\star}}$ is minimum. Similar argument is also valid when $\eta_{1}$ and $\eta_{2}$ are fixed while, $\alpha$ is the variable. We exploit these properties to compute $\left\{\eta_{1}, \eta_{2}, \alpha\right\}$ such that $\mathsf{P_{e}^{\star}}$ is evaluated at local minima for a given SNR and $N_{r}$.

In the proposed algorithm, as presented in Algorithm~\ref{Algo}, we fix $\epsilon_{1} = 0$ and initialise $\eta_{1} = 0$. We also initialise $\eta_{2}$ and $\alpha$ with arbitrary values $\eta_{2}^{o}$ and $\alpha^{o}$, respectively. Using the initial values, the algorithm computes $\mathsf{P_{e}^{o}}$ using \eqref{eq:pe1}. The algorithm then obtains $\eta_{2}^{i}$ and $\alpha^{i}$ using Theorem~\ref{th:uni_eta2} and Theorem~\ref{th:uni_alpha}, respectively. The algorithm then evaluates $\mathsf{P_{e}^{\eta_{2}}}$, i.e., $\mathsf{P_{e}^{\star}}$ at $\left\{\eta_{1}, \eta_{2}^{i}, \alpha\right\}$ and $\mathsf{P_{e}^{\alpha}}$, i.e., $\mathsf{P_{e}^{\star}}$ at $\left\{\eta_{1}, \eta_{2}, \alpha^{i}\right\}$. If for a given $\eta_{1}$,  $\left\vert\mathsf{P_{e}^{\alpha}}-\mathsf{P_{e}^{\eta_{2}}}\right\vert < \delta_{\mathsf{P_{e}}}$, then the algorithm exits the inner while-loop with $\mathsf{P_{e}^{\iota}}$ such that $\mathsf{P_{e}^{\iota}} = \min\left(\mathsf{P_{e}^{\alpha}}, \mathsf{P_{e}^{\eta_{2}}}\right)$ else, the algorithm iteratively descents in the steepest direction with new values of $\eta_{2}$ and $\alpha$. After traversing several values of $\eta_{1}$, the algorithm finally stops when for a given $\eta_{1}$, the obtained $\mathsf{P_{e}^{\iota}}$ is within $\delta_{\mathsf{P_{e}}}$ resolution of the previously computed value. The points at which $\mathsf{P_{e}^{\star}}$ is minimum as per the algorithm are given by $\eta_{1}^{\dagger}$, $\eta_{2}^{\dagger}$ and $\alpha^{\dagger}$. We rearrange the constraint in~\eqref{eq:constraint} to obtain $\epsilon_{2}^{\dagger}=2-\alpha^{\dagger}\left(\eta_{1}^{\dagger} + \eta_{2}^{\dagger}-2\right)$. Further, in the beginning we fixed $\epsilon_{1}=0$, therefore, $\epsilon_{1}^{\dagger}=0$. Thus, the algorithm computes all the 5 variables, i.e., $\epsilon_{1}^{\dagger}$, $\epsilon_{2}^{\dagger}$, $\eta_{1}^{\dagger}$, $\eta_{2}^{\dagger}$, and $\alpha^{\dagger}$.

In the next section, we showcase the efficacy of our algorithm in finding the near-optimal constellation points at Charlie for NC-F2FD. We also discuss the complexity analysis of our proposed algorithm. 

\section{Simulation Result and Complexity Analysis}

\begin{figure}
		\centering
		\includegraphics[width=\columnwidth]{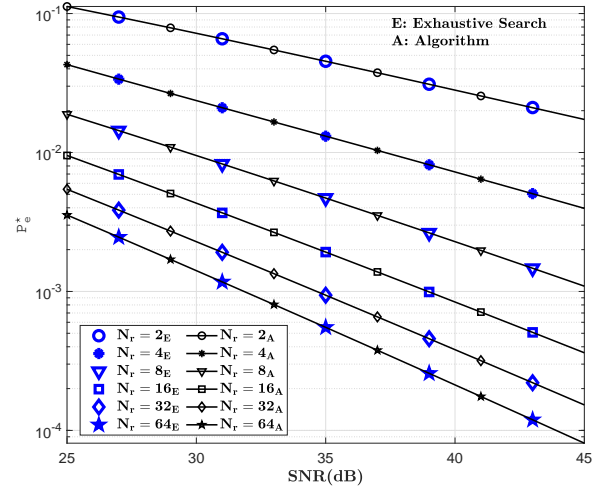}
		\caption{\label{fig:joint_error}Comparative error performance as a function of SNR, when constellation symbols are obtained using the Two-Layer Greedy Descent Algorithm ('A') and the exhaustive search ('E'). The subscript associated with the number of antennas denote the method used to obtain the constellation symbols, e.g., $\mathbf{N_{r}=2_{E}}$ means, for two receive antennas, exhaustive search is used to obtain the constellation symbols.}
	\end{figure}
	
\begin{figure}
		\centering
		\includegraphics[width=\columnwidth]{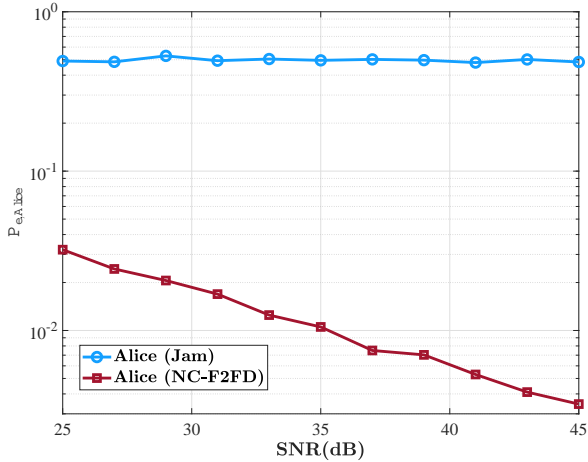}
		\caption{\label{fig:individual_error}Error performance of Alice for $N_{r}=4$, and $\Lambda=10^{-5}$ with and without NC-F2FD.}
		\end{figure}

In this section, we will demonstrate the error performance of NC-F2FD relaying scheme for various values of SNR and $N_{r}$. For all simulation purposes, we assume a practical FD radio at Charlie with SIC factor $\Lambda = 10^{-5}$~\cite{my_TCCN}. We first compute the optimal amplitude levels $\sqrt{\epsilon_{1}}$, $\sqrt{\epsilon_{2}}$, $\sqrt{\alpha\eta_{1}}$, and $\sqrt{\alpha\eta_{2}}$ that minimises $\mathsf{P_{e}^{\star}}$ using exhaustive search (denoted by $(\cdot)_{E}$ in~\figurename{\ref{fig:joint_error}}). We then plug these amplitude levels in~\eqref{eq:pe1} to obtain the error performance of NC-F2FD. We also compute the amplitude levels using the algorithm in the previous section. The constellation obtained using the algorithm is $\left\{\sqrt{\epsilon_{1}^{\dagger}}, \sqrt{\epsilon_{2}^{\dagger}}, \sqrt{\alpha^{\dagger}\eta_{1}^{\dagger}}, \sqrt{\alpha^{\dagger}\eta_{2}^{\dagger}}\right\}$. We again substitute these points in~\eqref{eq:pe1}, and plot the error performance for various SNR (denoted by $(\cdot)_{A}$ in~\figurename{\ref{fig:joint_error}}). It can be seen in~\figurename{\ref{fig:joint_error}} that the error performance of NC-F2FD exhaustive search and the algorithm overlaps, thus proving that the local minima $\left\{\epsilon_{1}^{\dagger}, \epsilon_{2}^{\dagger}, \eta_{1}^{\dagger}, \eta_{2}^{\dagger}, \alpha^{\dagger}\right\}$ is close to the global minima. Apart from plotting the error performance for various SNR, we also plot the error performance of NC-F2FD for various receive antennas at Bob. It is clear from the plots that as the receive-diversity increases, the error performance improves. Furthermore, in~\figurename{\ref{fig:individual_error}}, we plot the error performance of Alice when she uses power splitting factor $\alpha^{\dagger}$ as a function of SNR for 4 receive antennas at Bob, to showcase the efficacy of our scheme. It is evident that the error performance of Alice improves drastically after applying the countermeasure. However, in the process of aiding Alice, Charlie's performance degrades w.r.t. using a non-coherent signalling for point-to-point channel without using countermeasure~\cite{Ranjan}. 

In terms of complexity, the time taken by exhaustive search is very large compared to our algorithm. This is because the exhaustive search checks a number of quintets $(\alpha, \eta_{1}, \eta_{2}, \epsilon_{1}, \epsilon_{2})$  before computing the minima of $\mathsf{P_{e}^{\star}}$. The number of quintets depends on the resolution of each element. For instance, for the simulation results, $\delta_{\alpha} = 10^{-3}$, $\delta_{\eta_{1}} = \delta_{\eta_{2}} = 10^{-5}$, and $\delta_{\epsilon_{1}} = \delta_{\epsilon_{2}} = 10^{-4}$ give around 10 Million quintets for a given SNR and $N_{r}$. Here, $\delta_{(\cdot)}$ represents the step-size for increment of the element in subscript. In contrast, the Two-Layer Greedy Descent algorithm reduces the search space by using the results of Theorem~\ref{th:uni_eta2} and Theorem~\ref{th:uni_alpha}. Alternatively, we can also take a different approach and optimize $\mathsf{P_{e}^{\star}}$ w.r.t. $\eta_{1}$ and $\eta_{2}$ for a given $\alpha$. However, the time taken for the latter approach is greater than the former one, because with the former approach, the solution for $\eta_{1}$ is very close to the initialised value, i.e., $\eta_{1} = 0$. However, when we initialise the outer loop of the algorithm with $\alpha=0$ or $\alpha=1$, the algorithm takes more iteration as the optimal value of $\alpha$ is far from both $\alpha=0$ and $\alpha=1$. For instance, for a fixed SNR and $N_{r}$, the time taken by algorithms to converge to a solution when outer loop of algorithm is initialised $\eta_{1}=0$ and $\alpha=0$ is 1.2 seconds and 15 seconds, respectively. 

\section{Conclusion}

In this paper, we have presented DoS attack by a powerful FD \emph{jam and measure} adversary on nodes with low-latency constraint messages in fast-fading channels. We synthesised a fast-forward mitigation relaying scheme referred to as, NC-F2FD, where the victim and the helper node use modified non-coherent constellations to mitigate the jamming attack. Through rigorous analysis and simulation results, we showed that when both the nodes use NC-F2FD relaying scheme, the victim can not only successfully evade the jammer, but also compel the adversary to measure the same average power as before NC-F2FD.

\bibliographystyle{IEEEtran}
\bibliography{IEEEabrv, Ref}

\end{document}